\title{Finer Tight Bounds for Coloring on Clique-Width}
\titlerunning{Finer Tight Bounds for Coloring on Clique-Width} 
\author{Michael Lampis}{Université Paris-Dauphine, PSL Research University, CNRS, UMR 7243 \\ LAMSADE, 75016, Paris, France}{michail.lampis@dauphine.fr}{}{}
\authorrunning{Michael Lampis} 
\subjclass{Theory of Computation $\rightarrow$ Design and Analysis of Algorithms $\rightarrow$ Parameterized Complexity and Exact Algorithms}
\keywords{Clique-width, SETH, Coloring}
\newcommand{\lc}{\textsc{List Coloring}\xspace}
\newcommand{\kc}{\textsc{Coloring}\xspace}
\newcommand{\tsat}{\textsc{3-SAT}\xspace}
\newcommand{\sat}{\textsc{SAT}\xspace} 
\newcommand{\csp}{\textsc{CSP}\xspace}
\newcommand{\qcsp}{$q$-\textsc{CSP}-$B$\xspace}
\newcommand{\mtw}{\ensuremath\mathrm{mtw}}
\newcommand{\mpw}{\ensuremath\mathrm{mpw}}
\newcommand{\pw}{\ensuremath\mathrm{pw}}
\newcommand{\tw}{\ensuremath\mathrm{tw}}
\newcommand{\cw}{\ensuremath\mathrm{cw}} 
\newcounter{count:constr1}
\begin{document}

\maketitle

\begin{abstract}

We revisit the complexity of the classical $k$-\kc problem parameterized by
	clique-width. This is a very well-studied problem that becomes highly
	intractable when the number of colors $k$ is large. However, much less
	is known on its complexity for small, concrete values of $k$. In this
	paper, we completely determine the complexity of $k$-\kc parameterized
	by clique-width for any fixed $k$, under the SETH. Specifically, we
	show that for all $k\ge 3,\epsilon>0$, $k$-\kc cannot be solved in time
	$O^*\left((2^k-2-\epsilon)^{\cw}\right)$,  and give an algorithm
	running in time $O^*\left((2^k-2)^{\cw}\right)$.  Thus, if the SETH is
	true, $2^k-2$ is the ``correct'' base of the exponent for every $k$.

Along the way, we also consider the complexity of $k$-\kc parameterized by the
	related parameter modular treewidth ($\mtw$).  In this case we show
	that the ``correct'' running time, under the SETH, is
	$O^*\left({k\choose \lfloor k/2\rfloor}^{\mtw}\right)$. If we base our
	results on a weaker assumption (the ETH), they imply that $k$-\kc\
	cannot be solved in time $n^{o(\cw)}$, even on instances with $O(\log
	n)$ colors.

\end{abstract}

\section{Introduction}

\textsc{Graph Coloring} (from now on simply \kc) is one of the most intensely
studied problems in theoretical computer science. In this classical problem we
are given a graph $G=(V,E)$ and an integer $k$ and are asked if we can
partition $V$ into $k$ sets inducing edge-less graphs. \kc is a notoriously
hard problem as it remains NP-hard even in very restricted cases (e.g.
4-regular planar graphs \cite{Dailey80}) and is essentially completely
inapproximable in general \cite{FeigeK98,Zuckerman07}. This intractability has
motivated the study of the problem in the framework of parameterized
complexity, especially with respect to structural graph parameters.\footnote{In
the remainder, we assume that the reader is familiar with the basics of
parameterized complexity, such as the class FPT, as given in relevant textbooks
\cite{CyganFKLMPPS15,FlumG06}}

Treewidth is by far the most widely studied among such graph parameters, and
\kc has long been known to be FPT by treewidth. This can be seen by either
invoking Courcelle's theorem \cite{0030804}, or by applying a straightforward
dynamic programming technique which, for each bag of a tree decomposition of
width $\tw$ considers all possible $k^{\tw}$ colorings.  Remarkably, thanks to
the work of Lokshtanov, Marx, and Saurabh \cite{LokshtanovMS11a}, we know that
this basic algorithm is likely to be optimal, or at least that improving it
would require a major breakthrough on \sat-solving, as, for any $k\ge 3,
\epsilon>0$, the existence of a $(k-\epsilon)^{\tw}$ algorithm would contradict
the Strong Exponential Time Hypothesis of Impagliazzo and Paturi
\cite{ImpagliazzoP01,ImpagliazzoPZ01}.  More recently, these lower bounds were
strengthened, as Jaffke and Jansen showed that a $(k-\epsilon)^w$ algorithm
would contradict the SETH when $w$ is the graph's vertex edit distance from
being a path \cite{JaffkeJ17}. The same paper showed that the trivial algorithm
can, however, be improved when one considers more restrictive parameters, such
as vertex cover, but still not to the point that the base of the exponent
becomes a constant.  These results thus paint a very clear picture of the
complexity of \kc with respect to treewidth and its restrictions.

One of the drawbacks of treewidth is that it does not cover dense graphs, even
if they have a very simple structure. This has led to the introduction of
clique-width, which is by now (arguably) the second most well-studied
parameter. The complexity of \kc\ parameterized by clique-width has also been
investigated.  Even though \kc is polynomial-time solvable when clique-width is
constant, the best known algorithm for this case runs in time 
$n^{2^{O(\cw)}}$ \cite{KoblerR03}.  Fomin et al.  \cite{FominGLS10} showed that
\kc is not FPT for clique-width (under standard assumptions),  and this was
recently followed up by Golovach et al.  \cite{GolovachL0Z18} who showed,
somewhat devastatingly, that the aforementioned algorithm is likely to be
optimal, as an algorithm running in $n^{2^{o(\cw)}}$ would contradict the ETH.
The problem thus seems to become significantly harder for clique-width, and
this has, in part, motivated the study of alternative dense graph parameters,
such as split-matching width \cite{SaetherT16}, modular-width
\cite{GajarskyLO13}, and twin cover \cite{Ganian11}, all of which make \kc FPT.

\subparagraph*{Contribution:} Although the results mentioned above demonstrate
a clear jump in the complexity of \kc when moving from treewidth to
clique-width, we observe that they leave open a significant hole: all the
aforementioned hardness results for clique-width
(\cite{FominGLS10,GolovachL0Z18}) only apply to the case where $k$ is large
(polynomially related to the size of the graph). It is not hard to see that the
problem becomes significantly easier if both $\cw$ and $k$ are assumed to have
moderate values; indeed \kc is FPT when parameterized by $\cw+k$
\cite{KoblerR03}.  Since the case where $k$ is relatively small is arguably the
most interesting scenario for most applications, we are strongly motivated to
take a closer look at the complexity of \kc parameterized by clique-width, in
order to obtain a more fine-grained and quantitative estimate of the ``price of
generality'' for this problem \emph{for each fixed value of $k$}. Our aim is to
reach tight bounds that paint a crisper picture of the complexity of the
problem than what can be inferred by lower bounds parameterized only by
clique-width, in the same way that the results of \cite{LokshtanovMS11a} do for
$k$-\kc on treewidth.  

The main result of this paper is a lower bound which states that for all $k\ge
3, \epsilon>0$, $k$-\kc cannot be solved in time $O^*\left((2^k-2-\epsilon)^{\cw}\right)$,
unless the SETH is false.  This result gives a concrete, detailed answer to the
question of how much the complexity of $3$-\kc, $4$-\kc, and generally $k$-\kc,
increases as one moves from treewidth to clique-width. As in the lower bound of
\cite{LokshtanovMS11a}, this result is established through a reduction from
\sat.  The main challenge here is that we need to pack a much larger amount of
information per unit of width, and in particular that the graph induced by most
label sets must be edge-less (otherwise many of the $2^k-2$ choices we need to
encode would be invalid).  We work around this difficulty by a delicate use of
the rename operation used in clique-width expressions.

Though having $2^k-2$ in the base of the running time may seem somewhat curious
(and certainly less natural than the $k^{\tw}$ bounds of
\cite{LokshtanovMS11a}), we then go on to prove that this is the ``correct''
bound by giving a matching algorithm. The algorithm is based on standard DP
techniques (including subset convolution \cite{BodlaenderLRV10,RooijBR09}), but
requires a non-standard trick that ``looks ahead'' in the decomposition to
lower the table size from $(2^k-1)^{\cw}$ to $(2^k-2)^{\cw}$. This improves the
previously known DP algorithm of \cite{KoblerR03}, which runs in
$O^*\left(4^{k\cdot\cw}\right)$.

Beyond these results for clique-width we also consider the closely related
parameters modular treewidth and modular pathwidth, which have more recently
been considered as more restricted versions of clique-width
\cite{Mengel16,PaulusmaSS16}. The modular treewidth of a graph $G$ is defined
as the treewidth of the graph obtained from $G$ if one collapses each twin
class into a single vertex, where two vertices are twins if they have the same
neighbors.  By slightly altering our results for clique-width we tightly
characterize the complexity of $k$-\kc for these parameters: the problem is
solvable in time $O^*\left({k\choose \lfloor k/2\rfloor}^{\mtw}\right)$, but not
solvable in $O^*\left(({k\choose \lfloor k/2\rfloor}-\epsilon)^{\mpw}\right)$ under the SETH.
Using the same reduction but relaxing the hypothesis to the ETH, we show that
$k$-\kc cannot be solved in time $n^{o(\mpw)}$, and hence neither in time
$n^{o(\cw)}$ even on instances where $k=O(\log n)$. This can be seen as a
strengthening of the lower bound of \cite{FominGLS10}, which applies only to
clique-width and uses $\Omega(n)$ colors. Our result is incomparable to the
more recent double-exponential bound of \cite{GolovachL0Z18} as it applies to
the more restricted case where the number of colors is logarithmic, and is
tight for this case. Indeed, any reduction giving a double-exponential bound,
such as the one in \cite{GolovachL0Z18}, must inevitably use more than $\log n$
colors, otherwise it would contradict the aforementioned algorithms.

\subparagraph*{Non-binary CSPs.} We mention as a secondary contribution of this
paper a proof that, under the SETH, $n$-variable CSPs over an alphabet of size
$B$ cannot be solved in time $(B-\epsilon)^n$, for any $B,\epsilon$. The
interest of such a result is not so much technical (its proof is implicit in
previous SETH-based bounds, going back to \cite{LokshtanovMS11a}), as
conceptual. Such CSPs provide a convenient starting point for a SETH-based
lower bound for \emph{any base} of the exponential and hence allow us to
isolate a technical part of such proofs from the main part of the reduction.
This explicit statement on the hardness of CPSs has allowed the proofs of this
paper to be significantly shortened, and may facilitate the design of other
SETH-based hardness proofs.

\section{Definitions and Preliminaries}

We use standard graph-theoretic notation and assume that the reader is familiar
with the basics of parameterized complexity, as well as standard notions such
as treewidth \cite{CyganFKLMPPS15,FlumG06}. Let us recall the definition of
clique-width (see \cite{CourcelleO00,CourcelleMR00} for more details). A
labeled graph $G$ has clique-width $w$ if it can be constructed using $w$
labels and the following four basic operations: Introduce($i$), for $i\in
\{1,\ldots,w\}$, which constructs a single-vertex graph whose vertex has label
$i$; Union($G_1,G_2$), which constructs the disjoint union of two labeled
graphs of clique-width $w$; Rename($i,j$) which changes the label of all
vertices labeled $i$ to $j$; and Join($i,j$) which adds all possible edges
between vertices labeled $i$ and vertices labeled $j$. Computing a graph's
clique-width is NP-hard \cite{FellowsRRS09}, and the best currently known
approximation is exponential in clique-width \cite{OumS06}. In this paper, we
will often assume that we are given together with a graph $G$, a clique-width
expression constructing $G$. Since most of our results are negative, this only
makes them stronger, as it shows that our lower bounds do not rely on the
hardness of computing clique-width. We view a clique-width expression as a
rooted binary tree, where the sub-tree rooted in each internal node represents
the corresponding sub-graph of $G$. We use $\cw(G)$ to denote the minimum
number of labels needed to construct a clique-width expression of $G$, and
$\tw(G),\pw(G)$ to denote the treewidth and pathwidth of $G$ respectively.

In a graph $G=(V,E)$ we say that $u,v\in V$ are false twins if $N(u)=N(v)$ and
true twins if $N[u]=N[v]$, where $N[u]=N(u)\cup\{u\}$ denotes the closed
neighborhood of $u$. We say that $u,v$ are twins if they are true or false
twins. We note that in any graph $G$ the partition of vertices into twin
classes is always unique, as the property of being twins is an equivalence
relation \cite{Lampis12}.  Let $G^t$ be the graph obtained from $G$ by deleting
from each twin class all but a single vertex. We define (following
\cite{Mengel16}) the modular treewidth of $G$, denoted $\mtw(G)$, as
$\tw(G^t)$, and similarly the modular pathwidth $\mpw(G)$ as $\pw(G^t)$.

\begin{lemma} \label{lem:parameters}

For all $G$, $\pw(G)\ge \mpw(G) \ge \cw(G)-2$ and $\pw(G)\ge \mpw(G) \ge
\mtw(G)$.

\end{lemma}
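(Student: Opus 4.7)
Three of the four inequalities are immediate. Since $G^t$ is an induced subgraph of $G$, monotonicity of pathwidth and treewidth under vertex deletion gives $\pw(G^t) \leq \pw(G)$ and $\tw(G^t) \leq \pw(G^t) \leq \pw(G)$, so $\pw(G) \geq \mpw(G)$, $\pw(G) \geq \mtw(G)$, and $\mpw(G) \geq \mtw(G)$. All the work is in the inequality $\mpw(G) \geq \cw(G)-2$, which I would attack by adapting the classical construction showing $\cw(G) \leq \pw(G)+2$ so that it operates on a path decomposition of $G^t$ rather than of $G$ itself and introduces an entire twin class of $G$ at each step.

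Let $w := \mpw(G)$ and fix a nice path decomposition of $G^t$ of width $w$. The clique-width expression I want to build uses labels $\{0, 1, \ldots, w+1\}$, a total of $w+2$. The label $0$ plays two roles: it is the ``parking'' label for twin classes whose representative in $G^t$ has been forgotten, and it is a scratch label used temporarily inside a side subtree that builds each new twin class. Scanning the path decomposition from left to right, I maintain the invariant that, in the graph constructed so far, each twin class $T_u$ with $u$ in the current bag carries its own distinct label $\ell_u \in \{1,\ldots,w+1\}$, and every already-introduced but currently forgotten twin class carries label $0$. A forget node for $u$ is translated into a single Rename$(\ell_u, 0)$; since bags have size at most $w+1$, an introduce node always finds a free label in $\{1,\ldots,w+1\}$ to serve as $\ell_v$.

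The introduce step for a vertex $v$ of $G^t$ with twin class $T_v$ in $G$ is the heart of the construction. I build a side subtree that realizes $G[T_v]$ with every vertex labeled $\ell_v$, using only labels $\{0, \ell_v\}$ inside that subtree. If $T_v$ is a false-twin class (hence an independent set), this is just the disjoint union of $|T_v|$ singletons with label $\ell_v$. If $T_v$ is a true-twin class (hence a clique), I start from a singleton labeled $\ell_v$ and, for each remaining vertex of $T_v$, apply the sequence Introduce$(0)$, Union, Join$(\ell_v, 0)$, Rename$(0, \ell_v)$; the final Rename leaves no vertex of the subtree carrying label $0$. I then take the Union of this subtree with the main graph and issue Join$(\ell_v, \ell_u)$ for every $u$ currently in the bag with $uv \in E(G^t)$; because each label in $\{1,\ldots,w+1\}$ currently carries exactly one live twin class, this correctly adds the complete bipartite graph between $T_v$ and $T_u$ and nothing else.

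The delicate point, and the place where the counting could break, is the dual role of label $0$. One must check that reusing label $0$ as a scratch label inside each twin-class subtree does not interfere with the forgotten classes parked at label $0$ in the main graph. This is why the subtree for $T_v$ is built in isolation and is Unioned with the main graph only \emph{after} the terminal Rename$(0, \ell_v)$: at the moment of Union there are no label-$0$ vertices in the subtree, and no subsequent operation applies Join to label $0$. Since the path decomposition of $G^t$ guarantees that each edge of $G^t$ has both endpoints in some common bag, every inter-class edge is added exactly once, and the whole expression uses $w+2 = \mpw(G)+2$ labels, proving $\cw(G) \leq \mpw(G)+2$.
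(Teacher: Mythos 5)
Your proof is correct, and for the three easy inequalities it matches the paper exactly. For the main inequality $\mpw(G)\ge \cw(G)-2$ you take a more monolithic route than the paper. The paper argues in two modular steps: first, that deleting a twin does not change clique-width (given an expression for $G-v$, one introduces $v$ immediately after its twin $u$ under a fresh label, joins the two labels if they are true twins, and renames), which gives $\cw(G)=\cw(G^t)$; second, it invokes the classical bound $\cw(H)\le \pw(H)+2$ applied to $H=G^t$. You instead inline both steps into a single explicit construction that scans a path decomposition of $G^t$ and introduces an entire twin class (as a clique or an independent set, built with the scratch label) at each introduce node. Both arguments ultimately rest on the same path-decomposition-to-clique-width translation with a parking label; the paper's version is shorter because it delegates that translation to a known fact and keeps the twin-handling local to a single introduce node, whereas yours is self-contained and makes the label accounting fully explicit. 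The point you single out as delicate --- unioning the twin-class subtree into the main expression only after its terminal Rename$(0,\ell_v)$, so that the scratch use of label $0$ never collides with the parked forgotten classes --- is indeed the one step that needs checking, and you handle it correctly.
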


\begin{proof}

The inequality $\pw(G)\ge \mpw(G)$ follows trivially from the definition of
modular pathwidth and the fact that deleting vertices from a graph can only
decrease its pathwidth. The inequality $\mpw(G)\ge \mtw(G)$ follows from the
fact that paths are trees. 

Finally, observe that if $u,v$ are twins of $G$ then $\cw(G) = \cw(G-v)$,
because given a clique-width expression for the graph obtained by deleting $v$,
we can obtain a clique-width expression for $G$ by introducing $v$ immediately
after $u$ using a different label, joining them if they are true twins, and
then renaming the label of $v$ to that of $u$. We therefore have $\cw(G) =
\cw(G^t) \le \pw(G^t)+2 =\mpw(G)+2$.  \end{proof}

The Exponential Time Hypothesis (ETH) of Impagliazzo, Paturi, and Zane
\cite{ImpagliazzoPZ01} states that there exists a constant $c_3>1$ such that
\tsat on instances with $n$ variables cannot be solved in time $c_3^n$. If the
ETH is true then we can define, for all $q\ge 3$ a constant $c_q>1$ such that
$q$-\sat, that is, \sat on instances where clauses have maximum size $q$,
cannot be solved in time $c_q^n$. The Strong Exponential Time Hypothesis (SETH)
\cite{ImpagliazzoP01} states that $\lim_{q\to \infty} c_q = 2$, or equivalently
that, for each $\epsilon>0$ there exists a $q$ such that $q$-\sat cannot be
solved in $(2-\epsilon)^n$. We note that sometimes a slightly weaker form of
the SETH is used, which states simply that \sat cannot be solved in
$(2-\epsilon)^n$ for any $\epsilon>0$. The two formulations are not currently
known to be equivalent.  In this paper we use the original, stronger
formulation of \cite{ImpagliazzoP01} (see also e.g. \cite{CyganDLMNOPSW16})
which assumes that $c_q$ tends to $2$.

For any $q,B\ge2$ we define the \qcsp problem as follows: we are given a set
$X$ of $n$ variables which take values in $\{1,\ldots,B\}$, and a set
$\mathcal{C}$ of $q$-constraints on $X$. A $q$-constraint $c$ is defined by an
ordered tuple $V(c)$ of $q$ variables of $X$, and a set
$S(c)\subseteq\{1,\ldots,B\}^q$ of satisfying assignments for $c$.  The
question is whether there exists an assignment $\sigma: X\to \{1,\ldots,B\}$
which satisfies all constraints  $c\in\mathcal{C}$. We say that a constraint
$c\in\mathcal{C}$ is satisfied if applying $\sigma$ to $V(c)$ produces a tuple
of assignments that appears in $S(c)$. To simplify presentation, we will assume
that in the input the list $S(c)$ of the at most $B^q$ satisfying assignments
of each constraint is given explicitly, and that a \qcsp instance is allowed to
contain constraints on fewer than $q$ variables (as we can add dummy variables
to a constraint).

\section{SETH and Non-binary CSPs}

The SETH states, informally, that as \sat clauses become larger, eventually the
best algorithm for \sat is simply to try out all possible assignments to all
variables. In this section we show that the same is essentially true for \csp
with a larger, non-binary alphabet. The interest in presenting such a result is
that very often we seek to prove a SETH-based lower bound showing that a
problem does not admit an algorithm running in $c^w$, for some constant $c$ and
width parameter $w$ (such as treewidth, or in our case clique-width). This
becomes complicated when we reduce directly from \sat if $c$ is not a power of
$2$ as one cannot make a one-to-one correspondence between binary \sat
variables and ``units of width'' (in our case labels) in the new instance,
which are intended to encode $c$ choices.  As a result, essentially all known
SETH lower bounds of this form include as part of their construction a group
gadget, which maps every $t$ variables of the original \sat instance to $p$
elements of the new problem, for appropriately chosen integers $p,t$ (see e.g.
\cite{BorradaileL16,CyganNPPRW11,JaffkeJ17,LokshtanovMS11a}). Such gadgets are,
however, often cumbersome to design, because they require a problem-specific
trick that expresses a mapping of assignments from a binary to a non-binary
domain.  We therefore prefer to construct a custom-made CSP with a convenient
running time bound, which will later allow us to reduce directly to the problem
we are interested in (\kc on clique-width), in a way that maps exactly one
variable to one clique-width label. This will allow our SETH-based bounds to be
significantly simplified, as we will no longer have to worry about a
discrepancy between the bases of the exponentials.

\begin{theorem}\label{thm:csp}

For any $B\ge2$, $\epsilon>0$ we have the following: if the SETH is true, then
	there exists a $q$ such that $n$-variable \qcsp cannot be solved in
	time $O^*\left((B-\epsilon)^n\right)$.

\end{theorem}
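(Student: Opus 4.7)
The plan is to reduce $q'$-\sat to \qcsp for carefully chosen $q,q'$, so that a hypothetical $O^*((B-\epsilon)^n)$ algorithm for the CSP yields an $O^*((2-\delta)^N)$ algorithm for $q'$-\sat, contradicting the SETH. The guiding intuition is to encode blocks of Boolean variables as blocks of $B$-valued CSP variables, with compression ratio approaching the information-theoretic optimum $1/\log_2 B$.

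I would fix an integer parameter $s$ (to be chosen at the end in terms of $B$ and $\epsilon$) and set $t:=\lfloor s\log_2 B\rfloor$, so $2^t\le B^s$. Fix any surjection $\phi:\{1,\ldots,B\}^s\to\{0,1\}^t$, e.g.\ by reading a tuple as an integer in base $B$ and keeping its $t$ lowest bits. Given a $q'$-\sat instance on $N$ variables (pad so $t\mid N$), partition the variables into $N/t$ blocks of size $t$, and introduce $n:=Ns/t$ CSP variables organized into $N/t$ groups of $s$. Each group's $s$ $B$-valued variables decode via $\phi$ to a Boolean assignment of the corresponding block. For each \sat clause $C$, add a constraint whose scope consists of the (at most $q's$) CSP variables lying in groups touched by $C$, with $S(c)$ exactly those tuples whose $\phi$-decoding satisfies $C$. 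Setting $q:=q's$ gives a valid \qcsp instance, and satisfiability is preserved in both directions: any satisfying CSP assignment decodes to a satisfying Boolean assignment, and surjectivity of $\phi$ lifts any satisfying Boolean assignment back to the CSP.

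For the running time, an $O^*((B-\epsilon)^n)$ algorithm would produce a $q'$-\sat algorithm of cost $O^*\bigl((B-\epsilon)^{Ns/t}\bigr)$. Since $t/s\to\log_2 B$ and $\log_2(B-\epsilon)<\log_2 B$, for all sufficiently large $s$ (depending on $B,\epsilon$) we have $(B-\epsilon)^{s/t}\le 2-\delta$ for some $\delta>0$. The SETH in the $c_{q'}\to 2$ formulation then lets us choose $q'$ so that $q'$-\sat admits no $O^*((2-\delta)^N)$ algorithm, and $q:=q's$ is the final arity of the reduced CSP, giving the desired contradiction.

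The main subtlety is the order in which parameters are fixed: $\delta$ is determined from $B,\epsilon$ together with the compression ratio $s/t$, and only then is $q'$ (and hence $q$) chosen via the SETH; mixing up this order would leave $\delta$ dependent on $q'$ and prevent invoking the hypothesis. Everything else is bookkeeping: the floor in $t=\lfloor s\log_2 B\rfloor$ handles non-powers of $2$ (some CSP values then map to duplicate Boolean assignments, which is harmless), and the output size is polynomial since each constraint's satisfying list has size at most $B^q=O(1)$.
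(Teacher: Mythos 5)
Your proposal is correct and follows essentially the same route as the paper's proof: both group $t$ Boolean variables into a block of $s$ (the paper's $p$) $B$-ary CSP variables with $2^t \le B^s$, turn each clause into a constraint on the blocks it touches, and exploit the fact that the compression ratio can be pushed close enough to $\log_2 B$ that $(B-\epsilon)^{s/t} \le 2-\delta$, with the quantifier order ($\delta$ from $B,\epsilon$ first, then $q'$ from the SETH) handled identically. The only cosmetic difference is that you decode via a surjection $\{1,\ldots,B\}^s \to \{0,1\}^t$ while the paper encodes via an injection in the other direction and restricts satisfying tuples to its image; these are interchangeable.
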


\begin{proof}

Fix a $B,\epsilon$, and suppose that there is an algorithm solving $n$-variable
\qcsp in $(B-\epsilon)^n$ time, for any $q$. We will find some $\delta>0$, such
that we will obtain an algorithm that solves $n$-variable $q$-\sat in time
$(2-\delta)^n$ for any $q$. This will contradict the SETH.

Suppose that we are given a $q$-\sat instance with $n$ variables. We first
define an integer $p$ and a real $\delta>0$ such that there exists an integer
$t$ for which we have:

$$ (B-\epsilon)^p \le (2-\delta)^t < 2^t \le B^p $$

In particular, it suffices to select $p$ so that $B^p\ge 2 (B-\epsilon)^p$ for
such a $t$ to exist. Equivalently, $p\log B \ge 1 + p\log(B-\epsilon)
\Rightarrow p \ge \frac{1}{\log\frac{B}{B-\epsilon}}$, so we set $p = \lceil
\frac{1}{\log\frac{B}{B-\epsilon}}\rceil$, and then a $\delta$ that satisfies
the second inequality always exists.

We now group the $n$ variables of the given \sat instance into $\gamma=\lceil
n/t\rceil$ groups of at most $t$ variables each. Call these groups
$V_1,V_2,\ldots,V_{\gamma}$. For each such group $V_i$ we construct a group of
$p$ variables of the \csp instance, call them $X_i$. Furthermore, we define (in
some arbitrary way) a correspondence which for each assignment to the variables
of $V_i$ gives a distinct assignment to the variables of $X_i$. This is always
possible, since $2^t\le B^p$. Observe that our \csp instance has $\gamma p =
p\lceil n/t \rceil \le \frac{pn}{t}+p$ variables.

Now, for each clause of the $q$-\sat instance, if it involves variables from
the groups $V_{i_1},V_{i_2},\ldots,V_{i_q}$, we construct a constraint that
involves all the variables of the groups $X_{i_1},\ldots,X_{i_q}$ (in other
words, we transform $q$-clauses into $(qp)$-constraints). To define the
satisfying assignments of the constraint, recall that we have defined a
function that maps each assignment of a group $V_i$ to an assignment of a group
$X_i$. We extend these to obtain a function that maps assignments to
$V_{i_1}\cup\ldots\cup V_{i_q}$ to assignments of $X_{i_1}\cup\ldots\cup
X_{i_q}$, and retain as satisfying assignments for the constraint exactly those
assignments that are images of assignments that satisfy the original clause.

This completes the construction, and we remark that everything can be performed
in polynomial time if $q,B,\epsilon$ are constants, as $p,t$ only depend on
$B,\epsilon$. It is also not hard to see that the instances are equivalent: if
there is a satisfying \sat assignment, we give each $X_i$ the assignment that
the satisfying assignment of $V_i$ is mapped to; in the converse direction,
because all constraints are satisfied we have selected for each $X_i$ an
	assignment that is the image of some assignment to $V_i$, and in such a
	way that the obtained assignment for the \sat instance satisfies all
	clauses.

We now invoke the supposed \csp algorithm on our instance where each constraint
involves at most $qp$ variables, and the number of variables is $N=p\lceil
n/t\rceil \le \frac{pn}{t}+p$. The algorithm will run in time 

$$(B-\epsilon)^N \le \big((B-\epsilon)^p\big)^{\frac{n}{t}+1} \le
\big((2-\delta)^t\big)^{\frac{n}{t}+1} \le 2^t\cdot (2-\delta)^n$$

Again, because $t$ only depends on $B,\epsilon$, this running time is
$O^*((2-\delta)^n)$, and we have obtained a faster than $2^n$ algorithm for
$q$-\sat, for any $q$. \end{proof}

\section{SETH-based Lower Bound for Clique-width}\label{sec:cw}

In this section we present our main lower bound result stating that $k$-\kc
cannot be solved in time $O^*\left((2^k-2-\epsilon)^{\cw}\right)$, for any $k\ge 3,
\epsilon>0$, under the SETH. In Section \ref{sec:gadgets} we present some basic gadgets that will also be of use in our lower bound for
modular pathwidth (Section \ref{sec:mtw}). We then present the main
part of the proof in Section \ref{sec:cw-red}.

\subsection{List Coloring and Basic Gadgets}\label{sec:gadgets}

The high-level machinery that we will make use of in our reduction consists of
two major points: first, we would like to be able to express implication
constraints, that is, constraints of the form ``if vertex $u$ received color
$c_1$, then vertex $v$ must receive color $c_2$''; second, we would like to
express disjunction constraints of the form ``at least one of the vertices of
the set $S\subseteq V$ must take a special color $1$''. We build this machinery
in the following steps: first, we show that we can (almost) equivalently
produce an instance of the more general \lc problem; then we use the ability to
construct lists to make \emph{weak edge gadgets}, which for a given pair of
vertices $(u_1,u_2)$ rule out a specific pair of assigned colors; using these
weak edges we construct the aforementioned implication gadgets; and finally we
are able to implement OR constraints using paths on vertices with appropriate
lists.

We give all details for these constructions below. We remark however, for the
convenience of the reader, that a high-level understanding of the informal
meaning of implication gadgets and OR gadgets (precisely stated in Lemmata
\ref{lem:implication-color}, \ref{lem:OR}) is already sufficient to follow the
description of the main part of the reduction, given in Section
\ref{sec:cw-red}. See also Figure \ref{fig:gadgets}.

\subparagraph*{List Coloring} To simplify the presentation of our reduction it
will be convenient to use a slightly more general problem. In \lc, we are given
a graph $G$ and a list of possible colors associated with each vertex and are
asked if there is a proper coloring such that each vertex uses a color from its
list.  This problem clearly generalizes $k$-\kc, as all lists may be
$\{1,\ldots,k\}$.  We will make use of a reduction in the opposite direction.

\begin{lemma}\label{lem:list}

	There is a polynomial-time algorithm which, given an instance of \lc on
	a graph $G$ where all lists are subsets of $\{1,\ldots,k\}$, transforms
	it into an equivalent instance of $k$-\kc on a graph $G'$. Furthermore,
	the algorithm transforms a clique-width expression of $G$ with $\cw$
	labels, to a clique-width expression of $G'$ with $\cw+k$ labels.  If
	all twins of $G$ share the same list, the algorithm transforms a
	modular path decomposition of width $\mpw$ for $G$, to a modular path
	decomposition of $G'$ of width $\mpw+k$.

\end{lemma}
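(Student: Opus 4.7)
The plan is the standard palette-gadget reduction. Construct $G'$ from $G$ by adding a $k$-clique $P=\{p_1,\ldots,p_k\}$ (the \emph{palette}) together with, for every $v\in V(G)$ and every color $i\notin L(v)$, the edge $\{v,p_i\}$. To see that $G'$ gives an equivalent \kc instance, note that in any proper $k$-coloring of $G'$ the palette vertices must receive all $k$ distinct colors, so up to permuting color names we may assume $p_i$ has color $i$; the added edges then forbid each $v\in V(G)$ from taking any color outside $L(v)$, yielding a list coloring of $G$. The reverse direction is immediate by assigning color $i$ to $p_i$.

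For the clique-width bound, I would keep the $\cw$ labels of the given expression $\phi$ intact and reserve $k$ fresh labels $\lambda_1,\ldots,\lambda_k$ for the palette vertices. The target expression $\phi'$ begins with a short subexpression that introduces each $p_i$ with label $\lambda_i$ and applies all pairwise Joins so that $P$ becomes a $K_k$. The rest of $\phi$ is then processed, with two modifications: the palette is carried along so that it is present in the working subgraph whenever needed, and at each leaf Introduce$(\ell)$ creating a vertex $v$ with list $L(v)$ we add Join$(\ell,\lambda_i)$ for every $i\notin L(v)$ while $v$ is still the unique occupant of label $\ell$ in its subgraph. No labels beyond $\cw+k$ are required because the original labels remain available for the $G$-part and the palette labels are used only for palette vertices; the delicate point is to ensure that a single copy of the palette survives to the root of $\phi'$ and that each per-vertex join affects only its intended vertex, both achieved by careful scheduling rather than by allocating additional labels.

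For the modular pathwidth bound, I would take a path decomposition of $G^t$ of width $\mpw$ and add $P$ to every bag. The twin hypothesis is used precisely here: because any two twins of $G$ share the same list, they acquire the same palette-neighbors and so remain twins in $G'$; moreover palette vertices are pairwise non-twin and are not twins of any graph vertex (assuming no vertex has an empty list, in which case the instance is trivially negative). Hence $(G')^t$ is obtained from $G^t$ by attaching $P$ together with its selective edges to $V(G^t)$; every new edge of $(G')^t$ is incident to $P$ and therefore lies in every augmented bag, so the decomposition is valid and has width $\mpw+k$.

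The main obstacle is the clique-width step: arranging that each palette join in $\phi'$ acts on only its intended graph vertex, while the palette itself is instantiated exactly once and threaded through every Union of $\phi$, all within $\cw+k$ labels. The correctness argument for the reduction and the bag-augmentation for modular pathwidth are both essentially routine.
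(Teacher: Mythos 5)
Your reduction graph (a single $k$-clique palette $P$ with selective edges) and your modular-pathwidth argument are fine and match what the paper does for that part. The problem is the clique-width step, and it is exactly the point you flag as "the delicate point ... achieved by careful scheduling": that scheduling does not exist in general. A clique-width expression is a binary tree, and the subexpression building the palette $P$ can occupy only one subtree. If the given expression has the form Union$(e_1,e_2)$ with vertices needing palette edges in both branches (the generic case), then $P$ can be "present" in at most one of $e_1,e_2$ at introduce time; every vertex $v$ in the other branch can only be joined to $P$ after the Union, at which point the label of $v$ typically contains many other vertices with different lists, so Join$(\ell,\lambda_i)$ adds unwanted edges. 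Making the palette available at every leaf would essentially require linearizing the expression, which cannot be done within $\cw+O(1)$ labels (linear clique-width can be much larger than clique-width, just as pathwidth can exceed treewidth). So as written, the clique-width bound is not established.

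The paper sidesteps this by using a \emph{different gadget for the clique-width part}: instead of one shared $k$-clique, it introduces, at the leaf of each original vertex $u$, a fresh set of $k$ palette vertices on $k$ reserved labels $l_1,\ldots,l_k$, joins $u$ locally to the $l_i$ with $i\notin L(u)$ (at that moment each $l_i$-class in the leaf's subexpression is a single vertex, so the join is surgical), and only at the root joins every pair $l_i,l_j$. The union of all these vertices then forms a complete $k$-partite graph with $k$ parts; in any proper $k$-coloring each part is monochromatic with a distinct color, so the parts play the role of your palette clique, and each $u$ is excluded from the colors outside $L(u)$. This leaf-local construction needs no threading and uses exactly $\cw+k$ labels. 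If you want to rescue your write-up, replace the single $k$-clique by this complete $k$-partite palette for the clique-width claim (keeping the $k$-clique version for the modular pathwidth claim, where adding $P$ to every bag is indeed routine).
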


\begin{proof} For the modular pathwidth part, we add to the graph
a clique on $k$ vertices, call them $c_1,\ldots,c_k$. For each vertex $u$ of
$G$ whose list is $L\subseteq \{1,\ldots,k\}$ we connect $u$ to all $c_i$ such
that $i\not\in L$.  It is not hard to see that this produces an equivalent
instance.  Furthermore, if all twins of $G$ share the same list, they remain
twins. We can therefore obtain a modular path decomposition of the new graph by
adding $c_1,\ldots,c_k$ to all bags of the original decomposition. 
	
To obtain a clique-width expression we start with a clique-width expression of
the original graph, and let $l_1,\ldots,l_k$ be $k$ fresh labels. For every
vertex $u$ whose list is $L(u)\subseteq \{1,\ldots,k\}$, we replace its
introduce node with a sub-expression which introduces the vertex $u$, as well
as $k$ new vertices each using a distinct label from $l_1,\ldots,l_k$; and then
performs join operations between the label of $u$ and all labels $l_i$, for
$i\not\in L(u)$. In the root of the expression we add join operations between
every pair of labels in $\{l_1,\ldots,l_k\}$. As a result, we have added to the
graph a complete $k$-partite graph, and each original vertex is connected to a
subset of the parts of this graph in a way that simulates its list.
\end{proof}

\begin{figure}

	\begin{tabular}{ccc} 
\ifx\du\undefined
  \newlength{\du}
\fi
\setlength{\du}{15\unitlength}
\begin{tikzpicture}[even odd rule,scale=0.5]
\pgftransformxscale{1.000000}
\pgftransformyscale{-1.000000}
\definecolor{dialinecolor}{rgb}{0.000000, 0.000000, 0.000000}
\pgfsetstrokecolor{dialinecolor}
\pgfsetstrokeopacity{1.000000}
\definecolor{diafillcolor}{rgb}{1.000000, 1.000000, 1.000000}
\pgfsetfillcolor{diafillcolor}
\pgfsetfillopacity{1.000000}
\pgfsetlinewidth{0.100000\du}
\pgfsetdash{}{0pt}
\definecolor{diafillcolor}{rgb}{0.000000, 0.000000, 0.000000}
\pgfsetfillcolor{diafillcolor}
\pgfsetfillopacity{1.000000}
\pgfpathellipse{\pgfpoint{17.500000\du}{13.500000\du}}{\pgfpoint{0.500000\du}{0\du}}{\pgfpoint{0\du}{0.500000\du}}
\pgfusepath{fill}
\definecolor{dialinecolor}{rgb}{0.000000, 0.000000, 0.000000}
\pgfsetstrokecolor{dialinecolor}
\pgfsetstrokeopacity{1.000000}
\pgfpathellipse{\pgfpoint{17.500000\du}{13.500000\du}}{\pgfpoint{0.500000\du}{0\du}}{\pgfpoint{0\du}{0.500000\du}}
\pgfusepath{stroke}
\pgfsetlinewidth{0.100000\du}
\pgfsetdash{}{0pt}
\definecolor{diafillcolor}{rgb}{1.000000, 1.000000, 1.000000}
\pgfsetfillcolor{diafillcolor}
\pgfsetfillopacity{1.000000}
\pgfpathellipse{\pgfpoint{20.500000\du}{13.500000\du}}{\pgfpoint{0.500000\du}{0\du}}{\pgfpoint{0\du}{0.500000\du}}
\pgfusepath{fill}
\definecolor{dialinecolor}{rgb}{0.000000, 0.000000, 0.000000}
\pgfsetstrokecolor{dialinecolor}
\pgfsetstrokeopacity{1.000000}
\pgfpathellipse{\pgfpoint{20.500000\du}{13.500000\du}}{\pgfpoint{0.500000\du}{0\du}}{\pgfpoint{0\du}{0.500000\du}}
\pgfusepath{stroke}
\pgfsetlinewidth{0.100000\du}
\pgfsetdash{}{0pt}
\definecolor{diafillcolor}{rgb}{1.000000, 1.000000, 1.000000}
\pgfsetfillcolor{diafillcolor}
\pgfsetfillopacity{1.000000}
\pgfpathellipse{\pgfpoint{23.000000\du}{13.500000\du}}{\pgfpoint{0.500000\du}{0\du}}{\pgfpoint{0\du}{0.500000\du}}
\pgfusepath{fill}
\definecolor{dialinecolor}{rgb}{0.000000, 0.000000, 0.000000}
\pgfsetstrokecolor{dialinecolor}
\pgfsetstrokeopacity{1.000000}
\pgfpathellipse{\pgfpoint{23.000000\du}{13.500000\du}}{\pgfpoint{0.500000\du}{0\du}}{\pgfpoint{0\du}{0.500000\du}}
\pgfusepath{stroke}
\pgfsetlinewidth{0.100000\du}
\pgfsetdash{}{0pt}
\definecolor{diafillcolor}{rgb}{0.000000, 0.000000, 0.000000}
\pgfsetfillcolor{diafillcolor}
\pgfsetfillopacity{1.000000}
\pgfpathellipse{\pgfpoint{28.500000\du}{13.500000\du}}{\pgfpoint{0.500000\du}{0\du}}{\pgfpoint{0\du}{0.500000\du}}
\pgfusepath{fill}
\definecolor{dialinecolor}{rgb}{0.000000, 0.000000, 0.000000}
\pgfsetstrokecolor{dialinecolor}
\pgfsetstrokeopacity{1.000000}
\pgfpathellipse{\pgfpoint{28.500000\du}{13.500000\du}}{\pgfpoint{0.500000\du}{0\du}}{\pgfpoint{0\du}{0.500000\du}}
\pgfusepath{stroke}
\pgfsetlinewidth{0.100000\du}
\pgfsetdash{}{0pt}
\definecolor{diafillcolor}{rgb}{1.000000, 1.000000, 1.000000}
\pgfsetfillcolor{diafillcolor}
\pgfsetfillopacity{1.000000}
\pgfpathellipse{\pgfpoint{25.500000\du}{13.500000\du}}{\pgfpoint{0.500000\du}{0\du}}{\pgfpoint{0\du}{0.500000\du}}
\pgfusepath{fill}
\definecolor{dialinecolor}{rgb}{0.000000, 0.000000, 0.000000}
\pgfsetstrokecolor{dialinecolor}
\pgfsetstrokeopacity{1.000000}
\pgfpathellipse{\pgfpoint{25.500000\du}{13.500000\du}}{\pgfpoint{0.500000\du}{0\du}}{\pgfpoint{0\du}{0.500000\du}}
\pgfusepath{stroke}
\pgfsetlinewidth{0.100000\du}
\pgfsetdash{}{0pt}
\pgfsetbuttcap
{
\definecolor{diafillcolor}{rgb}{0.000000, 0.000000, 0.000000}
\pgfsetfillcolor{diafillcolor}
\pgfsetfillopacity{1.000000}
\definecolor{dialinecolor}{rgb}{0.000000, 0.000000, 0.000000}
\pgfsetstrokecolor{dialinecolor}
\pgfsetstrokeopacity{1.000000}
\draw (18.044922\du,13.500000\du)--(19.955078\du,13.500000\du);
}
\pgfsetlinewidth{0.100000\du}
\pgfsetdash{}{0pt}
\pgfsetbuttcap
{
\definecolor{diafillcolor}{rgb}{0.000000, 0.000000, 0.000000}
\pgfsetfillcolor{diafillcolor}
\pgfsetfillopacity{1.000000}
\definecolor{dialinecolor}{rgb}{0.000000, 0.000000, 0.000000}
\pgfsetstrokecolor{dialinecolor}
\pgfsetstrokeopacity{1.000000}
\draw (21.049927\du,13.500000\du)--(22.450073\du,13.500000\du);
}
\pgfsetlinewidth{0.100000\du}
\pgfsetdash{}{0pt}
\pgfsetbuttcap
{
\definecolor{diafillcolor}{rgb}{0.000000, 0.000000, 0.000000}
\pgfsetfillcolor{diafillcolor}
\pgfsetfillopacity{1.000000}
\definecolor{dialinecolor}{rgb}{0.000000, 0.000000, 0.000000}
\pgfsetstrokecolor{dialinecolor}
\pgfsetstrokeopacity{1.000000}
\draw (23.549927\du,13.500000\du)--(24.950073\du,13.500000\du);
}
\pgfsetlinewidth{0.100000\du}
\pgfsetdash{}{0pt}
\pgfsetbuttcap
{
\definecolor{diafillcolor}{rgb}{0.000000, 0.000000, 0.000000}
\pgfsetfillcolor{diafillcolor}
\pgfsetfillopacity{1.000000}
\definecolor{dialinecolor}{rgb}{0.000000, 0.000000, 0.000000}
\pgfsetstrokecolor{dialinecolor}
\pgfsetstrokeopacity{1.000000}
\draw (26.044922\du,13.500000\du)--(27.955078\du,13.500000\du);
}
\definecolor{dialinecolor}{rgb}{0.000000, 0.000000, 0.000000}
\pgfsetstrokecolor{dialinecolor}
\pgfsetstrokeopacity{1.000000}
\definecolor{diafillcolor}{rgb}{0.000000, 0.000000, 0.000000}
\pgfsetfillcolor{diafillcolor}
\pgfsetfillopacity{1.000000}
\node[anchor=base west,inner sep=0pt,outer sep=0pt,color=dialinecolor] at (17.000000\du,12.500000\du){};
\definecolor{dialinecolor}{rgb}{0.000000, 0.000000, 0.000000}
\pgfsetstrokecolor{dialinecolor}
\pgfsetstrokeopacity{1.000000}
\definecolor{diafillcolor}{rgb}{0.000000, 0.000000, 0.000000}
\pgfsetfillcolor{diafillcolor}
\pgfsetfillopacity{1.000000}
\node[anchor=base west,inner sep=0pt,outer sep=0pt,color=dialinecolor] at (19.000000\du,12.500000\du){\small$\{1,5\}$};
\definecolor{dialinecolor}{rgb}{0.000000, 0.000000, 0.000000}
\pgfsetstrokecolor{dialinecolor}
\pgfsetstrokeopacity{1.000000}
\definecolor{diafillcolor}{rgb}{0.000000, 0.000000, 0.000000}
\pgfsetfillcolor{diafillcolor}
\pgfsetfillopacity{1.000000}
\node[anchor=base west,inner sep=0pt,outer sep=0pt,color=dialinecolor] at (21.500000\du,15.000000\du){\small$\{2,5\}$};
\definecolor{dialinecolor}{rgb}{0.000000, 0.000000, 0.000000}
\pgfsetstrokecolor{dialinecolor}
\pgfsetstrokeopacity{1.000000}
\definecolor{diafillcolor}{rgb}{0.000000, 0.000000, 0.000000}
\pgfsetfillcolor{diafillcolor}
\pgfsetfillopacity{1.000000}
\node[anchor=base west,inner sep=0pt,outer sep=0pt,color=dialinecolor] at (24.000000\du,12.500000\du){\small$\{2,5\}$};
\definecolor{dialinecolor}{rgb}{0.000000, 0.000000, 0.000000}
\pgfsetstrokecolor{dialinecolor}
\pgfsetstrokeopacity{1.000000}
\definecolor{diafillcolor}{rgb}{0.000000, 0.000000, 0.000000}
\pgfsetfillcolor{diafillcolor}
\pgfsetfillopacity{1.000000}
\node[anchor=base west,inner sep=0pt,outer sep=0pt,color=dialinecolor] at (28.000000\du,12.500000\du){};
\pgfsetlinewidth{0.100000\du}
\pgfsetdash{}{0pt}
\definecolor{diafillcolor}{rgb}{0.000000, 0.000000, 0.000000}
\pgfsetfillcolor{diafillcolor}
\pgfsetfillopacity{1.000000}
\pgfpathellipse{\pgfpoint{17.500000\du}{20.000000\du}}{\pgfpoint{0.500000\du}{0\du}}{\pgfpoint{0\du}{0.500000\du}}
\pgfusepath{fill}
\definecolor{dialinecolor}{rgb}{0.000000, 0.000000, 0.000000}
\pgfsetstrokecolor{dialinecolor}
\pgfsetstrokeopacity{1.000000}
\pgfpathellipse{\pgfpoint{17.500000\du}{20.000000\du}}{\pgfpoint{0.500000\du}{0\du}}{\pgfpoint{0\du}{0.500000\du}}
\pgfusepath{stroke}
\pgfsetlinewidth{0.100000\du}
\pgfsetdash{}{0pt}
\definecolor{diafillcolor}{rgb}{0.000000, 0.000000, 0.000000}
\pgfsetfillcolor{diafillcolor}
\pgfsetfillopacity{1.000000}
\pgfpathellipse{\pgfpoint{28.500000\du}{20.000000\du}}{\pgfpoint{0.500000\du}{0\du}}{\pgfpoint{0\du}{0.500000\du}}
\pgfusepath{fill}
\definecolor{dialinecolor}{rgb}{0.000000, 0.000000, 0.000000}
\pgfsetstrokecolor{dialinecolor}
\pgfsetstrokeopacity{1.000000}
\pgfpathellipse{\pgfpoint{28.500000\du}{20.000000\du}}{\pgfpoint{0.500000\du}{0\du}}{\pgfpoint{0\du}{0.500000\du}}
\pgfusepath{stroke}
\definecolor{dialinecolor}{rgb}{0.000000, 0.000000, 0.000000}
\pgfsetstrokecolor{dialinecolor}
\pgfsetstrokeopacity{1.000000}
\definecolor{diafillcolor}{rgb}{0.000000, 0.000000, 0.000000}
\pgfsetfillcolor{diafillcolor}
\pgfsetfillopacity{1.000000}
\node[anchor=base west,inner sep=0pt,outer sep=0pt,color=dialinecolor] at (21.500000\du,19.000000\du){(1,5)};
\pgfsetlinewidth{0.100000\du}
\pgfsetdash{{0.500000\du}{0.500000\du}}{0\du}
\pgfsetbuttcap
{
\definecolor{diafillcolor}{rgb}{0.000000, 0.000000, 0.000000}
\pgfsetfillcolor{diafillcolor}
\pgfsetfillopacity{1.000000}
\definecolor{dialinecolor}{rgb}{0.000000, 0.000000, 0.000000}
\pgfsetstrokecolor{dialinecolor}
\pgfsetstrokeopacity{1.000000}
\draw (18.049194\du,20.000000\du)--(27.950806\du,20.000000\du);
}
\definecolor{dialinecolor}{rgb}{0.000000, 0.000000, 0.000000}
\pgfsetstrokecolor{dialinecolor}
\pgfsetstrokeopacity{1.000000}
\definecolor{diafillcolor}{rgb}{0.000000, 0.000000, 0.000000}
\pgfsetfillcolor{diafillcolor}
\pgfsetfillopacity{1.000000}
\node[anchor=base west,inner sep=0pt,outer sep=0pt,color=dialinecolor] at (22.500000\du,17.000000\du){\Large$\equiv$};
\end{tikzpicture} &
\ifx\du\undefined
  \newlength{\du}
\fi
\setlength{\du}{15\unitlength}
\begin{tikzpicture}[even odd rule,scale=0.5]
\pgftransformxscale{1.000000}
\pgftransformyscale{-1.000000}
\definecolor{dialinecolor}{rgb}{0.000000, 0.000000, 0.000000}
\pgfsetstrokecolor{dialinecolor}
\pgfsetstrokeopacity{1.000000}
\definecolor{diafillcolor}{rgb}{1.000000, 1.000000, 1.000000}
\pgfsetfillcolor{diafillcolor}
\pgfsetfillopacity{1.000000}
\pgfsetlinewidth{0.100000\du}
\pgfsetdash{}{0pt}
\definecolor{diafillcolor}{rgb}{0.000000, 0.000000, 0.000000}
\pgfsetfillcolor{diafillcolor}
\pgfsetfillopacity{1.000000}
\pgfpathellipse{\pgfpoint{17.500000\du}{20.000000\du}}{\pgfpoint{0.500000\du}{0\du}}{\pgfpoint{0\du}{0.500000\du}}
\pgfusepath{fill}
\definecolor{dialinecolor}{rgb}{0.000000, 0.000000, 0.000000}
\pgfsetstrokecolor{dialinecolor}
\pgfsetstrokeopacity{1.000000}
\pgfpathellipse{\pgfpoint{17.500000\du}{20.000000\du}}{\pgfpoint{0.500000\du}{0\du}}{\pgfpoint{0\du}{0.500000\du}}
\pgfusepath{stroke}
\pgfsetlinewidth{0.100000\du}
\pgfsetdash{}{0pt}
\definecolor{diafillcolor}{rgb}{0.000000, 0.000000, 0.000000}
\pgfsetfillcolor{diafillcolor}
\pgfsetfillopacity{1.000000}
\pgfpathellipse{\pgfpoint{28.500000\du}{20.000000\du}}{\pgfpoint{0.500000\du}{0\du}}{\pgfpoint{0\du}{0.500000\du}}
\pgfusepath{fill}
\definecolor{dialinecolor}{rgb}{0.000000, 0.000000, 0.000000}
\pgfsetstrokecolor{dialinecolor}
\pgfsetstrokeopacity{1.000000}
\pgfpathellipse{\pgfpoint{28.500000\du}{20.000000\du}}{\pgfpoint{0.500000\du}{0\du}}{\pgfpoint{0\du}{0.500000\du}}
\pgfusepath{stroke}
\definecolor{dialinecolor}{rgb}{0.000000, 0.000000, 0.000000}
\pgfsetstrokecolor{dialinecolor}
\pgfsetstrokeopacity{1.000000}
\definecolor{diafillcolor}{rgb}{0.000000, 0.000000, 0.000000}
\pgfsetfillcolor{diafillcolor}
\pgfsetfillopacity{1.000000}
	\node[anchor=base west,inner sep=0pt,outer sep=0pt,color=dialinecolor] at (20.500000\du,19.000000\du){$(1\to2)$};
\pgfsetlinewidth{0.100000\du}
\pgfsetdash{{0.100000\du}{0.100000\du}}{0\du}
\pgfsetbuttcap
{
\definecolor{diafillcolor}{rgb}{0.000000, 0.000000, 0.000000}
\pgfsetfillcolor{diafillcolor}
\pgfsetfillopacity{1.000000}
\pgfsetarrowsend{stealth}
\definecolor{dialinecolor}{rgb}{0.000000, 0.000000, 0.000000}
\pgfsetstrokecolor{dialinecolor}
\pgfsetstrokeopacity{1.000000}
\draw (18.049194\du,20.000000\du)--(27.950806\du,20.000000\du);
}
\definecolor{dialinecolor}{rgb}{0.000000, 0.000000, 0.000000}
\pgfsetstrokecolor{dialinecolor}
\pgfsetstrokeopacity{1.000000}
\definecolor{diafillcolor}{rgb}{0.000000, 0.000000, 0.000000}
\pgfsetfillcolor{diafillcolor}
\pgfsetfillopacity{1.000000}
\node[anchor=base west,inner sep=0pt,outer sep=0pt,color=dialinecolor] at (22.000000\du,17.000000\du){\Large$\equiv$};
\pgfsetlinewidth{0.100000\du}
\pgfsetdash{}{0pt}
\definecolor{diafillcolor}{rgb}{0.000000, 0.000000, 0.000000}
\pgfsetfillcolor{diafillcolor}
\pgfsetfillopacity{1.000000}
\pgfpathellipse{\pgfpoint{17.500000\du}{13.500000\du}}{\pgfpoint{0.500000\du}{0\du}}{\pgfpoint{0\du}{0.500000\du}}
\pgfusepath{fill}
\definecolor{dialinecolor}{rgb}{0.000000, 0.000000, 0.000000}
\pgfsetstrokecolor{dialinecolor}
\pgfsetstrokeopacity{1.000000}
\pgfpathellipse{\pgfpoint{17.500000\du}{13.500000\du}}{\pgfpoint{0.500000\du}{0\du}}{\pgfpoint{0\du}{0.500000\du}}
\pgfusepath{stroke}
\pgfsetlinewidth{0.100000\du}
\pgfsetdash{}{0pt}
\definecolor{diafillcolor}{rgb}{0.000000, 0.000000, 0.000000}
\pgfsetfillcolor{diafillcolor}
\pgfsetfillopacity{1.000000}
\pgfpathellipse{\pgfpoint{28.500000\du}{13.500000\du}}{\pgfpoint{0.500000\du}{0\du}}{\pgfpoint{0\du}{0.500000\du}}
\pgfusepath{fill}
\definecolor{dialinecolor}{rgb}{0.000000, 0.000000, 0.000000}
\pgfsetstrokecolor{dialinecolor}
\pgfsetstrokeopacity{1.000000}
\pgfpathellipse{\pgfpoint{28.500000\du}{13.500000\du}}{\pgfpoint{0.500000\du}{0\du}}{\pgfpoint{0\du}{0.500000\du}}
\pgfusepath{stroke}
\pgfsetlinewidth{0.100000\du}
\pgfsetdash{{0.500000\du}{0.500000\du}}{0\du}
\pgfsetbuttcap
{
\definecolor{diafillcolor}{rgb}{0.000000, 0.000000, 0.000000}
\pgfsetfillcolor{diafillcolor}
\pgfsetfillopacity{1.000000}
\definecolor{dialinecolor}{rgb}{0.000000, 0.000000, 0.000000}
\pgfsetstrokecolor{dialinecolor}
\pgfsetstrokeopacity{1.000000}
\pgfpathmoveto{\pgfpoint{27.950395\du}{13.499726\du}}
\pgfpatharc{304}{237}{8.920159\du and 8.920159\du}
\pgfusepath{stroke}
}
\pgfsetlinewidth{0.100000\du}
\pgfsetdash{{0.500000\du}{0.500000\du}}{0\du}
\pgfsetbuttcap
{
\definecolor{diafillcolor}{rgb}{0.000000, 0.000000, 0.000000}
\pgfsetfillcolor{diafillcolor}
\pgfsetfillopacity{1.000000}
\definecolor{dialinecolor}{rgb}{0.000000, 0.000000, 0.000000}
\pgfsetstrokecolor{dialinecolor}
\pgfsetstrokeopacity{1.000000}
\pgfpathmoveto{\pgfpoint{27.950139\du}{13.498723\du}}
\pgfpatharc{333}{208}{5.585079\du and 5.585079\du}
\pgfusepath{stroke}
}
\pgfsetlinewidth{0.100000\du}
\pgfsetdash{{0.500000\du}{0.500000\du}}{0\du}
\pgfsetbuttcap
{
\definecolor{diafillcolor}{rgb}{0.000000, 0.000000, 0.000000}
\pgfsetfillcolor{diafillcolor}
\pgfsetfillopacity{1.000000}
\definecolor{dialinecolor}{rgb}{0.000000, 0.000000, 0.000000}
\pgfsetstrokecolor{dialinecolor}
\pgfsetstrokeopacity{1.000000}
\pgfpathmoveto{\pgfpoint{18.049214\du}{13.500013\du}}
\pgfpatharc{124}{57}{8.920159\du and 8.920159\du}
\pgfusepath{stroke}
}
\definecolor{dialinecolor}{rgb}{0.000000, 0.000000, 0.000000}
\pgfsetstrokecolor{dialinecolor}
\pgfsetstrokeopacity{1.000000}
\definecolor{diafillcolor}{rgb}{0.000000, 0.000000, 0.000000}
\pgfsetfillcolor{diafillcolor}
\pgfsetfillopacity{1.000000}
\node[anchor=base west,inner sep=0pt,outer sep=0pt,color=dialinecolor] at (21.500000\du,9.500000\du){$(1,1)$};
\definecolor{dialinecolor}{rgb}{0.000000, 0.000000, 0.000000}
\pgfsetstrokecolor{dialinecolor}
\pgfsetstrokeopacity{1.000000}
\definecolor{diafillcolor}{rgb}{0.000000, 0.000000, 0.000000}
\pgfsetfillcolor{diafillcolor}
\pgfsetfillopacity{1.000000}
\node[anchor=base west,inner sep=0pt,outer sep=0pt,color=dialinecolor] at (21.500000\du,11.500000\du){$(1,3)$};
\definecolor{dialinecolor}{rgb}{0.000000, 0.000000, 0.000000}
\pgfsetstrokecolor{dialinecolor}
\pgfsetstrokeopacity{1.000000}
\definecolor{diafillcolor}{rgb}{0.000000, 0.000000, 0.000000}
\pgfsetfillcolor{diafillcolor}
\pgfsetfillopacity{1.000000}
\node[anchor=base west,inner sep=0pt,outer sep=0pt,color=dialinecolor] at (21.500000\du,14.500000\du){$(1,k)$};
\definecolor{dialinecolor}{rgb}{0.000000, 0.000000, 0.000000}
\pgfsetstrokecolor{dialinecolor}
\pgfsetstrokeopacity{1.000000}
\definecolor{diafillcolor}{rgb}{0.000000, 0.000000, 0.000000}
\pgfsetfillcolor{diafillcolor}
\pgfsetfillopacity{1.000000}
\node[anchor=base west,inner sep=0pt,outer sep=0pt,color=dialinecolor] at (22.500000\du,13.000000\du){\Large$\ldots$};
\end{tikzpicture}&
\ifx\du\undefined
  \newlength{\du}
\fi
\setlength{\du}{15\unitlength}
\begin{tikzpicture}[even odd rule,scale=0.5]
\pgftransformxscale{1.000000}
\pgftransformyscale{-1.000000}
\definecolor{dialinecolor}{rgb}{0.000000, 0.000000, 0.000000}
\pgfsetstrokecolor{dialinecolor}
\pgfsetstrokeopacity{1.000000}
\definecolor{diafillcolor}{rgb}{1.000000, 1.000000, 1.000000}
\pgfsetfillcolor{diafillcolor}
\pgfsetfillopacity{1.000000}
\pgfsetlinewidth{0.100000\du}
\pgfsetdash{}{0pt}
\definecolor{diafillcolor}{rgb}{0.000000, 0.000000, 0.000000}
\pgfsetfillcolor{diafillcolor}
\pgfsetfillopacity{1.000000}
\pgfpathellipse{\pgfpoint{17.500000\du}{13.500000\du}}{\pgfpoint{0.500000\du}{0\du}}{\pgfpoint{0\du}{0.500000\du}}
\pgfpathellipse{\pgfpoint{23.000000\du}{13.500000\du}}{\pgfpoint{0.500000\du}{0\du}}{\pgfpoint{0\du}{0.500000\du}}
\pgfusepath{fill}
\definecolor{dialinecolor}{rgb}{0.000000, 0.000000, 0.000000}
\pgfsetstrokecolor{dialinecolor}
\pgfsetstrokeopacity{1.000000}
\pgfpathellipse{\pgfpoint{17.500000\du}{13.500000\du}}{\pgfpoint{0.500000\du}{0\du}}{\pgfpoint{0\du}{0.500000\du}}
\pgfpathellipse{\pgfpoint{14.000000\du}{13.500000\du}}{\pgfpoint{0.500000\du}{0\du}}{\pgfpoint{0\du}{0.500000\du}}
\pgfpathellipse{\pgfpoint{32.000000\du}{13.500000\du}}{\pgfpoint{0.500000\du}{0\du}}{\pgfpoint{0\du}{0.500000\du}}
\pgfusepath{stroke}
\pgfsetlinewidth{0.100000\du}
\pgfsetdash{}{0pt}
\definecolor{diafillcolor}{rgb}{1.000000, 1.000000, 1.000000}
\pgfsetfillcolor{diafillcolor}
\pgfsetfillopacity{1.000000}
\pgfpathellipse{\pgfpoint{20.500000\du}{13.500000\du}}{\pgfpoint{0.500000\du}{0\du}}{\pgfpoint{0\du}{0.500000\du}}
\pgfusepath{fill}
\definecolor{dialinecolor}{rgb}{0.000000, 0.000000, 0.000000}
\pgfsetstrokecolor{dialinecolor}
\pgfsetstrokeopacity{1.000000}
\pgfpathellipse{\pgfpoint{20.500000\du}{13.500000\du}}{\pgfpoint{0.500000\du}{0\du}}{\pgfpoint{0\du}{0.500000\du}}
\pgfusepath{stroke}
\pgfsetlinewidth{0.100000\du}
\pgfsetdash{}{0pt}
\definecolor{diafillcolor}{rgb}{1.000000, 1.000000, 1.000000}
\pgfsetfillcolor{diafillcolor}
\pgfsetfillopacity{1.000000}
\pgfusepath{fill}
\definecolor{dialinecolor}{rgb}{0.000000, 0.000000, 0.000000}
\pgfsetstrokecolor{dialinecolor}
\pgfsetstrokeopacity{1.000000}
\pgfpathellipse{\pgfpoint{23.000000\du}{13.500000\du}}{\pgfpoint{0.500000\du}{0\du}}{\pgfpoint{0\du}{0.500000\du}}
\pgfusepath{stroke}
\pgfsetlinewidth{0.100000\du}
\pgfsetdash{}{0pt}
\definecolor{diafillcolor}{rgb}{0.000000, 0.000000, 0.000000}
\pgfsetfillcolor{diafillcolor}
\pgfsetfillopacity{1.000000}
\pgfpathellipse{\pgfpoint{28.500000\du}{13.500000\du}}{\pgfpoint{0.500000\du}{0\du}}{\pgfpoint{0\du}{0.500000\du}}
\pgfusepath{fill}
\definecolor{dialinecolor}{rgb}{0.000000, 0.000000, 0.000000}
\pgfsetstrokecolor{dialinecolor}
\pgfsetstrokeopacity{1.000000}
\pgfpathellipse{\pgfpoint{28.500000\du}{13.500000\du}}{\pgfpoint{0.500000\du}{0\du}}{\pgfpoint{0\du}{0.500000\du}}
\pgfusepath{stroke}
\pgfsetlinewidth{0.100000\du}
\pgfsetdash{}{0pt}
\definecolor{diafillcolor}{rgb}{1.000000, 1.000000, 1.000000}
\pgfsetfillcolor{diafillcolor}
\pgfsetfillopacity{1.000000}
\pgfpathellipse{\pgfpoint{25.500000\du}{13.500000\du}}{\pgfpoint{0.500000\du}{0\du}}{\pgfpoint{0\du}{0.500000\du}}
\pgfusepath{fill}
\definecolor{dialinecolor}{rgb}{0.000000, 0.000000, 0.000000}
\pgfsetstrokecolor{dialinecolor}
\pgfsetstrokeopacity{1.000000}
\pgfpathellipse{\pgfpoint{25.500000\du}{13.500000\du}}{\pgfpoint{0.500000\du}{0\du}}{\pgfpoint{0\du}{0.500000\du}}
\pgfusepath{stroke}
\pgfsetlinewidth{0.100000\du}
\pgfsetdash{}{0pt}
\pgfsetbuttcap
{
\definecolor{diafillcolor}{rgb}{0.000000, 0.000000, 0.000000}
\pgfsetfillcolor{diafillcolor}
\pgfsetfillopacity{1.000000}
\definecolor{dialinecolor}{rgb}{0.000000, 0.000000, 0.000000}
\pgfsetstrokecolor{dialinecolor}
\pgfsetstrokeopacity{1.000000}
\draw (18.044922\du,13.500000\du)--(19.955078\du,13.500000\du);
}
{
\definecolor{diafillcolor}{rgb}{0.000000, 0.000000, 0.000000}
\pgfsetfillcolor{diafillcolor}
\pgfsetfillopacity{1.000000}
\definecolor{dialinecolor}{rgb}{0.000000, 0.000000, 0.000000}
\pgfsetstrokecolor{dialinecolor}
\pgfsetstrokeopacity{1.000000}
\draw (14.544922\du,13.500000\du)--(17.955078\du,13.500000\du);
}
{
\definecolor{diafillcolor}{rgb}{0.000000, 0.000000, 0.000000}
\pgfsetfillcolor{diafillcolor}
\pgfsetfillopacity{1.000000}
\definecolor{dialinecolor}{rgb}{0.000000, 0.000000, 0.000000}
\pgfsetstrokecolor{dialinecolor}
\pgfsetstrokeopacity{1.000000}
\draw (28.544922\du,13.500000\du)--(31.455078\du,13.500000\du);
}
\pgfsetlinewidth{0.100000\du}
\pgfsetdash{}{0pt}
\pgfsetbuttcap
{
\definecolor{diafillcolor}{rgb}{0.000000, 0.000000, 0.000000}
\pgfsetfillcolor{diafillcolor}
\pgfsetfillopacity{1.000000}
\definecolor{dialinecolor}{rgb}{0.000000, 0.000000, 0.000000}
\pgfsetstrokecolor{dialinecolor}
\pgfsetstrokeopacity{1.000000}
\draw (21.049927\du,13.500000\du)--(22.450073\du,13.500000\du);
}
\pgfsetlinewidth{0.100000\du}
\pgfsetdash{}{0pt}
\pgfsetbuttcap
{
\definecolor{diafillcolor}{rgb}{0.000000, 0.000000, 0.000000}
\pgfsetfillcolor{diafillcolor}
\pgfsetfillopacity{1.000000}
\definecolor{dialinecolor}{rgb}{0.000000, 0.000000, 0.000000}
\pgfsetstrokecolor{dialinecolor}
\pgfsetstrokeopacity{1.000000}
\draw (23.549927\du,13.500000\du)--(24.950073\du,13.500000\du);
}
\pgfsetlinewidth{0.100000\du}
\pgfsetdash{}{0pt}
\pgfsetbuttcap
{
\definecolor{diafillcolor}{rgb}{0.000000, 0.000000, 0.000000}
\pgfsetfillcolor{diafillcolor}
\pgfsetfillopacity{1.000000}
\definecolor{dialinecolor}{rgb}{0.000000, 0.000000, 0.000000}
\pgfsetstrokecolor{dialinecolor}
\pgfsetstrokeopacity{1.000000}
\draw (26.044922\du,13.500000\du)--(27.955078\du,13.500000\du);
}
\definecolor{dialinecolor}{rgb}{0.000000, 0.000000, 0.000000}
\pgfsetstrokecolor{dialinecolor}
\pgfsetstrokeopacity{1.000000}
\definecolor{diafillcolor}{rgb}{0.000000, 0.000000, 0.000000}
\pgfsetfillcolor{diafillcolor}
\pgfsetfillopacity{1.000000}
\node[anchor=base west,inner sep=0pt,outer sep=0pt,color=dialinecolor] at (17.000000\du,12.500000\du){};
\definecolor{dialinecolor}{rgb}{0.000000, 0.000000, 0.000000}
\pgfsetstrokecolor{dialinecolor}
\pgfsetstrokeopacity{1.000000}
\definecolor{diafillcolor}{rgb}{0.000000, 0.000000, 0.000000}
\pgfsetfillcolor{diafillcolor}
\pgfsetfillopacity{1.000000}
\node[anchor=base west,inner sep=0pt,outer sep=0pt,color=dialinecolor] at (19.000000\du,12.500000\du){\small$\{2,3\}$};
\definecolor{dialinecolor}{rgb}{0.000000, 0.000000, 0.000000}
\pgfsetstrokecolor{dialinecolor}
\pgfsetstrokeopacity{1.000000}
\definecolor{diafillcolor}{rgb}{0.000000, 0.000000, 0.000000}
\pgfsetfillcolor{diafillcolor}
\pgfsetfillopacity{1.000000}
\node[anchor=base west,inner sep=0pt,outer sep=0pt,color=dialinecolor] at (21.000000\du,15.000000\du){\small$\{1,2,3\}$};
\node[anchor=base west,inner sep=0pt,outer sep=0pt,color=dialinecolor] at (15.500000\du,15.000000\du){\small$\{1,2,3\}$};
\node[anchor=base west,inner sep=0pt,outer sep=0pt,color=dialinecolor] at (26.500000\du,15.000000\du){\small$\{1,2,3\}$};
\definecolor{dialinecolor}{rgb}{0.000000, 0.000000, 0.000000}
\pgfsetstrokecolor{dialinecolor}
\pgfsetstrokeopacity{1.000000}
\definecolor{diafillcolor}{rgb}{0.000000, 0.000000, 0.000000}
\pgfsetfillcolor{diafillcolor}
\pgfsetfillopacity{1.000000}
\node[anchor=base west,inner sep=0pt,outer sep=0pt,color=dialinecolor] at (24.000000\du,12.500000\du){\small$\{2,3\}$};
\node[anchor=base west,inner sep=0pt,outer sep=0pt,color=dialinecolor] at (31.000000\du,12.500000\du){\small$\{3\}$};
\node[anchor=base west,inner sep=0pt,outer sep=0pt,color=dialinecolor] at (13.000000\du,12.500000\du){\small$\{2\}$};
\end{tikzpicture} \end{tabular}
	\caption{Basic gadgets, where empty vertices are internal and solid
	vertices are endpoints that will be connected to the rest of the graph.
	On the left, a weak edge that forbids the combination $(1,5)$ on its
	endpoints. In the middle, an implication that forces color $2$ on the
	right if color $1$ is used on the left. On the right an OR gadget: one
	of the solid vertices must take color $1$.}\label{fig:gadgets}

\end{figure}
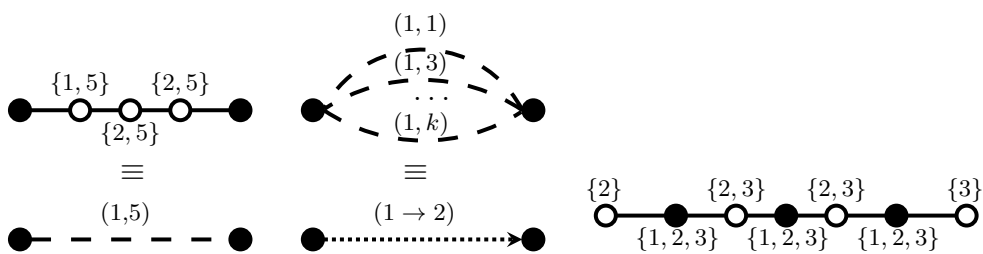

\subparagraph*{Weak Edges and Implications} Normally, the existence of an edge
$(u,v)$ in an instance of \kc forbids the vertices $u,v$ from obtaining the
same color, whatever that color may be.  We will find it convenient to
construct edges that forbid only a specific pair of colors from appearing on 
$u,v$, while allowing any other combination of colors to be used on these two
vertices. Similar versions of this gadget have appeared before, for example
\cite{LokshtanovMS11a,MarxM16}.

\begin{definition}

For two vertices $u_1,u_2$ of a graph $G$ and two colors $c_1,c_2$ a
\emph{$(c_1,c_2)$-weak edge} from $u_1$ to $u_2$ consists of the following:

\begin{enumerate}

\item Three new vertices $v_1,v_2,v_3$ such that $\{u_1,v_1,v_2,v_3,u_2\}$
	induces a path in this order, with endpoints $u_1,u_2$, and
		$v_1,v_2,v_3$ having no edges to the rest of $G$.

\item If $c_1\neq c_2$ let $c'$ be a color distinct from $c_1,c_2$. We assign
to $v_1,v_2,v_3$ the lists $\{c_1,c_2\},\{c_2,c'\},\{c_2,c'\}$.  If $c_1=c_2$,
we assign lists $\{c_1,c'\}, \{c',c''\}, \{c_1,c''\}$ to $v_1,v_2,v_3$
respectively, where $c',c''$ are two distinct colors, different from $c_1$.

\end{enumerate}

\end{definition}


\begin{lemma} \label{lem:weak-color}

Let $G$ be an instance of \lc that contains a $(c_1,c_2)$-weak edge from $u_1$
to $u_2$. Then $G$ does not admit a valid coloring that assigns colors
$(c_1,c_2)$ to $(u_1,u_2)$. Furthermore, if $G'$ denotes the graph obtained by
deleting the internal vertices of the weak edge, any proper list coloring of
$G'$ that does not assign $c_1$ to $u_1$ or does not assign $c_2$ to $u_2$ can
be extended to a proper list coloring of $G$.

\end{lemma}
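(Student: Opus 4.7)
The plan is to proceed by case analysis on whether $c_1 = c_2$ and then on the colors assigned to the endpoints, essentially propagating forced choices along the induced path $u_1-v_1-v_2-v_3-u_2$.

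For the first (forbidding) direction in the case $c_1 \neq c_2$, suppose $u_1$ is colored $c_1$ and $u_2$ is colored $c_2$. Since $v_1$ is adjacent to $u_1$ and $L(v_1) = \{c_1, c_2\}$, we must have $v_1 = c_2$. Then $v_2 \in \{c_2, c'\} \setminus \{c_2\} = \{c'\}$, so $v_2 = c'$, which forces $v_3 \in \{c_2, c'\} \setminus \{c'\} = \{c_2\}$. But then $v_3$ and $u_2$ both have color $c_2$, contradicting the edge $v_3 u_2$. The symmetric case $c_1 = c_2 =: c$ proceeds identically along the path using the lists $\{c,c'\}, \{c',c''\}, \{c,c''\}$: the assignment $u_1 = u_2 = c$ forces $v_1 = c'$, then $v_2 = c''$, then $v_3 = c$, again clashing with $u_2$.

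For the extension direction, assume we have a proper list coloring of $G'$ where either $u_1 \neq c_1$ or $u_2 \neq c_2$ (in the case $c_1 = c_2 = c$, this reads: not both endpoints colored $c$). I will give an assignment for $v_1,v_2,v_3$ depending on which side has the freedom. If $u_1 \neq c_1$, set $v_1 := c_1$; then choose $v_3 := c_2$ if $u_2 \neq c_2$ (this forces $v_2 := c'$, which is consistent with $v_1 = c_1 \neq c'$), and choose $v_3 := c'$ if $u_2 = c_2$ (this forces $v_2 := c_2$, still consistent with $v_1 = c_1$). Symmetrically, if $u_2 \neq c_2$, one can start from $v_3 := c_2$ and propagate backwards. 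The case $c_1 = c_2 = c$ is analogous, using the three distinct colors $c, c', c''$ so that whichever endpoint avoids $c$ offers the slack needed to pick consistent values for the three internal vertices.

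The only mild obstacle is making sure the case analysis really is exhaustive and that the choices never conflict with the endpoints' colors — in particular in the $c_1 = c_2$ case where the list of $v_1$ and $v_3$ both contain $c$, one must check that the second free color on each side ($c'$ on the left, $c''$ on the right) is distinct from both $c$ and the endpoint's color, which is guaranteed by the fact that $c', c''$ are chosen to differ from each other and from $c$, while $v_1,v_3$ only need to avoid the color of the adjacent endpoint. Verifying these few combinations completes the proof.
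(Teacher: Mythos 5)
Your proof is correct and follows essentially the same route as the paper's: for the forbidding direction you propagate forced colors along the path $u_1,v_1,v_2,v_3,u_2$ until a vertex runs out of list options, and for the extension direction you exhibit explicit colorings of $v_1,v_2,v_3$ by case analysis on which endpoint avoids its forbidden color. The paper organizes the converse slightly differently (fixing $v_1=c_1$ or $v_3=c_2$ first and noting the middle vertex always retains an option), but the substance is identical.
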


\begin{proof}

Suppose we assign colors $c_1,c_2$ to $u_1,u_2$. If $c_1\neq c_2$ this would
mean that we would have to color $v_1$ with $c_2$, which forces $v_2$ to color
$c'$, which leaves no available color to $v_3$.  If $c_1=c_2$ and we use this
color on both $u_1,u_2$ then we must give colors $c',c''$ to $v_1,v_3$
respectively, making it impossible to color $v_2$. 

For the converse direction, suppose a coloring of $G'$ assigns a color other
than $c_1$ to $u_1$. We can then give color $c_1$ to $v_1$ and this means that
for any color we give to $v_3$, $v_2$ has a color available to complete the
coloring. If on the other hand a coloring gives a color other than $c_2$ to
$u_2$, we give color $c_2$ to $v_3$. If $c_1=c_2$ we are done, because the list
of $v_2$ does not contain $c_2$, so we can always extend a coloring of $v_1$ to
$v_2$. If $c_1\neq c_2$, we color $v_2$ with $c'$, and this will not contradict
any coloring of $v_1$, since $c'$ is not in the list of $v_1$.  \end{proof}

Let us now use the weak edges we have defined above to construct an implication
gadget. The intuitive meaning of placing an implication gadget from a vertex
$u_1$ to a vertex $u_2$ is to impose the constraint that if $u_1$ is assigned
color $c_1$, then $u_2$ must be assigned color $c_2$.

\begin{definition}

For two vertices $u_1,u_2$ and two colors $(c_1,c_2)$ a $(c_1\to
c_2)$-implication from vertex $u_1$ to vertex $u_2$ is constructed as follows:
for each color $c'\neq c_2$, we add a $(c_1,c')$-weak edge from $u_1$ to $u_2$.

\end{definition}

\begin{lemma} \label{lem:implication-color}

Let $G$ be an instance of \lc that contains a $(c_1\to c_2)$-implication from
$u_1$ to $u_2$. Then $G$ does not admit a list coloring that gives color $c_1$
to $u_1$ and a color $c'\neq c_2$ to $u_2$. Furthermore, if $G'$ is the graph
obtained from $G$ by deleting the internal vertices of the implication gadget,
any coloring of $G'$ that either does not assign $c_1$ to $u_1$, or assigns
$c_2$ to $u_2$ can be extended to a coloring of $G$.

\end{lemma}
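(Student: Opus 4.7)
The plan is to derive Lemma \ref{lem:implication-color} as a direct consequence of Lemma \ref{lem:weak-color}, exploiting the fact that the $(c_1 \to c_2)$-implication is built as the disjoint union (on the set of internal vertices) of $(c_1,c')$-weak edges, one for each color $c' \neq c_2$, all sharing only the two endpoints $u_1,u_2$.

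For the first (hardness) direction, I would argue by contradiction. Suppose $G$ admits a valid list coloring in which $u_1$ receives $c_1$ and $u_2$ receives some $c' \neq c_2$. Then the $(c_1,c')$-weak edge present in the implication gadget is assigned exactly the forbidden pair on its endpoints, and Lemma \ref{lem:weak-color} immediately yields a contradiction.

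For the converse (extension) direction, assume we are given a proper list coloring of $G'$ that either avoids assigning $c_1$ to $u_1$ or assigns $c_2$ to $u_2$. I would extend this coloring to the internal vertices of each constituent weak edge separately; this is legitimate because the internal vertices of different weak edges are pairwise non-adjacent and attached only to $u_1,u_2$ (they have no edges to the rest of $G$ by the definition of weak edges). In the first case, for every $(c_1,c')$-weak edge we have $u_1$ not colored $c_1$, so Lemma \ref{lem:weak-color} lets us complete each gadget. In the second case, $u_2$ is colored $c_2$, and each weak edge in the gadget is a $(c_1,c')$-weak edge with $c' \neq c_2$; thus $u_2$ is \emph{not} colored $c'$, so Lemma \ref{lem:weak-color} again applies.

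The only potential subtlety, rather than a real obstacle, is to confirm that the internal vertices of the different weak edges do not interfere: since they are introduced as fresh vertices with no edges outside their own weak-edge path, any extension we produce for one weak edge is compatible with any other, and compatibility with the coloring of $G'$ is guaranteed by Lemma \ref{lem:weak-color} itself.
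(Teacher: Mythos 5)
Your proposal is correct and follows essentially the same route as the paper: both directions are obtained by applying Lemma~\ref{lem:weak-color} to each constituent $(c_1,c')$-weak edge, with the observation that the weak edges share only the endpoints $u_1,u_2$ so their extensions do not interfere. The paper's proof is just a terser version of the same argument.
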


\begin{proof}

Follows directly from the proof of Lemma \ref{lem:weak-color}. In particular,
if $u_1$ receives $c_1$ and $u_2$ receives $c'\neq c_2$, by construction there
exists a $(c_1,c')$-weak edge from $u_1$ to $u_2$ which cannot be colored. For
the converse direction, all weak edges of the implication gadget are activated
either by setting $u_1$ to $c_1$, or setting $u_2$ to a color other than $c_2$.
\end{proof}

\begin{lemma} \label{lem:weak-pw}

Let $G$ be an instance of \lc, and $G'$ be the graph obtained from $G$ by
replacing every $(c_1,c_2)$-weak edge or $(c_1\to c_2)$-implication gadget with
endpoints $u_1,u_2$ with an edge $(u_1,u_2)$ (or simply deleting the internal
vertices of the weak edge if $(u_1,u_2)$ already exists).  Then $\pw(G) \le
\pw(G') + 3$.

\end{lemma}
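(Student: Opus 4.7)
The plan is to start with an optimal path decomposition of $G'$, of width $w = \pw(G')$, and modify it locally around each (replaced) gadget by inserting a short sequence of bags that host the three internal vertices of each weak edge. Since a weak edge has only three internal vertices $v_1, v_2, v_3$ forming a path of length $4$ between $u_1$ and $u_2$, and since in $G'$ the edge $(u_1, u_2)$ is present, every such pair of endpoints already appears together in some bag of the decomposition of $G'$; this is the anchor for the insertion.

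The main construction is as follows. For each weak edge with endpoints $u_1, u_2$ and internal vertices $v_1, v_2, v_3$, pick any bag $B$ of the decomposition of $G'$ that contains both $u_1$ and $u_2$. Immediately after $B$ in the path, insert three new bags
\[
B \cup \{v_1\},\quad B \cup \{v_1, v_2\},\quad B \cup \{v_2, v_3\},
\]
and then continue with the original successor of $B$. The first new bag covers the edge $(u_1, v_1)$, the second covers $(v_1, v_2)$, and the third covers $(v_2, v_3)$ and $(v_3, u_2)$. Each of $v_1, v_2, v_3$ appears only in consecutive new bags, so the contiguity property is preserved; the endpoints $u_1, u_2$ appear in an extended but still contiguous range of bags. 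The maximum bag size becomes at most $|B|+3 \le (w+1)+3$, so width increases by at most $3$.

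For implication gadgets, which are defined as a collection of at most $k-1$ weak edges between the same pair of endpoints $(u_1, u_2)$, the plan is to process these weak edges one at a time, each time inserting the three-bag sequence above after the same anchor bag $B$ (so that internal vertices of different weak edges never coexist in the decomposition). Since the internal vertices of distinct weak edges are disjoint and have no edges to each other, handling them sequentially is correct, and the maximum bag size still grows by at most $3$ over that of $G'$.

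The subtle point to verify is only bookkeeping: that repeatedly inserting such three-bag sequences around various anchor bags (possibly the same bag for many gadgets) never causes a vertex of the original graph to lose the contiguous interval property, and that every edge of $G$ is covered. The former is immediate because internal gadget vertices are introduced and removed within a localized block and original vertices are either already in $B$ (so their interval is simply lengthened) or untouched; the latter follows from the explicit listing of edges above for each weak edge, together with the fact that edges of $G$ not belonging to any gadget are already present in $G'$ and therefore already covered by the original decomposition. This yields $\pw(G) \le \pw(G') + 3$, as claimed. The main obstacle, if any, is simply the clean verification that stacking the insertions for potentially many gadgets sharing a common bag $B$ never forces more than three extra vertices into a single bag, which is ensured by handling gadgets sequentially and forgetting all newly introduced vertices before moving to the next one.
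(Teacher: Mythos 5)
Your proposal is correct and follows essentially the same approach as the paper: anchor on a bag containing both endpoints $u_1,u_2$ (which exists since $(u_1,u_2)$ is an edge of $G'$) and insert bags hosting the internal vertices immediately after it, processing gadgets one at a time. The only cosmetic difference is that the paper inserts a single duplicated bag containing all three internal vertices (size $|B|+3$), whereas you spread them over three bags of size at most $|B|+2$, which in fact establishes the slightly stronger bound $\pw(G)\le \pw(G')+2$.
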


\begin{proof}

We prove the lemma just for $(c_1,c_2)$-weak edges, since implication gadgets
are just collections of weak edges that share the same endpoints.

Consider a path decomposition of $G'$. We construct a path decomposition that
contains the internal vertices of a weak edge with endpoints $u_1,u_2$ as
follows: first we find a bag of the decomposition of $G'$ that contains both
$u_1,u_2$ (such a bag exists because $(u_1,u_2)$ is an edge of $G'$); then we
insert after this bag an identical bag, into which we insert the three internal
vertices of the weak edge. We repeat this process for all weak edges.
\end{proof}

\subparagraph*{OR gadgets} We will also make use of a gadget that forces any
valid list coloring of a graph to assign a special color $1$ to one vertex out
of a set of vertices. Invariably, the idea will be that this will be a color
that activates some implications, allowing us to propagate information about
the coloring between parts of the graph. We recall that a similar version of an
OR gadget was also used in \cite{LokshtanovMS11a}.

\begin{definition}

	An OR gadget on an independent set of vertices $S$, denoted OR($S$), is
	constructed as follows: we assign list $\{1,2,3\}$ to all vertices of
	$S$; we construct a new set $S'$ of internal vertices (that will not be
	connected to the rest of $G$), such that $|S'|=|S|+1$ and $S\cup S'$
	induces a path alternating between vertices of $S$ and $S'$; we assign
	list $\{2,3\}$ to all vertices of $S'$, except the two endpoints of the
	path, which receive lists $\{2\},\{3\}$, respectively.

\end{definition}

\begin{lemma} \label{lem:OR}

	If $G$ is a \lc instance that contains an OR($S$) gadget then $G$ does
	not admit a list coloring that does not use color $1$ in any vertex of
	$S$. Furthermore, for any vertex $u\in S$, there exists a proper list
	coloring of the graph induced by the gadget that assigns color $1$ only
	to $u$.

\end{lemma}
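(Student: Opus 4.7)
The plan is to analyze the path $P$ induced by the gadget on $S \cup S'$, which has $2|S|+1$ vertices and, since $|S'|=|S|+1$, must begin and end at vertices of $S'$. I would index the vertices along $P$ as $v_0, v_1, \ldots, v_{2|S|}$, so that the $S'$-vertices sit at even positions and the $S$-vertices at odd positions; in particular $v_0$ has list $\{2\}$, $v_{2|S|}$ has list $\{3\}$, the remaining $S'$-vertices have list $\{2,3\}$, and the $S$-vertices have list $\{1,2,3\}$.

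For the impossibility direction the plan is a simple parity argument. Suppose toward contradiction that some proper list coloring assigns color $1$ to no vertex of $S$; then every vertex of $P$ is colored from $\{2,3\}$, so the restriction to $P$ is a proper $2$-coloring of a path. Any such coloring is determined by the color of one endpoint and alternates along the path, so two vertices at positions of the same parity must receive the same color. But $v_0$ and $v_{2|S|}$ both lie at even positions, hence would have to agree in color, contradicting their singleton lists $\{2\}$ and $\{3\}$. This forces color $1$ to appear on at least one vertex of $S$.

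For the realizability direction I would pick any $u = v_{2i-1} \in S$, assign it color $1$ from its list, and observe that removing $u$ splits $P$ into two sub-paths whose extreme vertices include $v_0$ (list $\{2\}$) on one side and $v_{2|S|}$ (list $\{3\}$) on the other, while every remaining vertex on both sub-paths has a list containing both $2$ and $3$. Each sub-path is then colored by the forced $2$-alternation starting from its singleton-list endpoint; the interior vertices tolerate either choice, so no list conflict arises, and combining the two partial colorings with the assignment of $1$ to $u$ yields the required coloring of the gadget in which color $1$ appears only on $u$.

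The only substantive observation in the whole argument is the parity check in the impossibility direction, which is precisely why the gadget is defined with $|S'|=|S|+1$: the extra $S'$-vertex places both singleton-list endpoints at the same parity of position, which is exactly what makes it impossible to $2$-color $P$ using only colors from $\{2,3\}$ and therefore forces color $1$ to appear somewhere in $S$. Once this is noticed, both directions follow from the standard fact that proper $2$-colorings of a path are determined by the color of either endpoint.
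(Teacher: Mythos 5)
Your proof is correct and follows essentially the same route as the paper's: the impossibility direction is the same parity/odd-length-path argument showing the two singleton-list endpoints would have to agree, and the realizability direction is the same decomposition into two sub-paths each containing exactly one singleton-list endpoint. Your write-up just makes the position indexing and the alternation argument more explicit.
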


\begin{proof}

Observe that a coloring that does not use color $1$ in $S$ would be a
	two-coloring of the path induced by $S\cup S'$ using colors $\{2,3\}$.
	However, this path has an odd number of vertices, therefore the
endpoints would need to be assigned the same color. Because we have assigned to
the endpoints the singleton lists $\{2\},\{3\}$, this is impossible. For the
second claim, if we give color $1$ to a vertex $u\in S$, the remaining vertices
induce two disjoint paths, each of which has a single vertex with a singleton
list, hence both paths can be two-colored with $\{2,3\}$. \end{proof}

\subsection{Reduction for Clique-width}\label{sec:cw-red}


\begin{theorem}\label{thm:cw}

For any $k\ge3, \epsilon>0$, if there exists an algorithm solving $k$-coloring
	in time $O^*\left((2^k-2-\epsilon)^{\cw}\right)$, where $\cw$ is the
	input graph's clique-width, then the SETH is false.

\end{theorem}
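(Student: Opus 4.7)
The plan is to reduce from \qcsp with $B=2^k-2$, invoking Theorem~\ref{thm:csp}. Fix a bijection between the CSP domain $\{1,\ldots,2^k-2\}$ and the non-empty proper subsets of $\{1,\ldots,k\}$, so every CSP value corresponds to one such subset. Given an $n$-variable instance I would construct an equivalent \lc instance (converted to $k$-\kc via Lemma~\ref{lem:list}) of clique-width $n+O_k(1)$. A hypothetical $O^*\!\left((2^k-2-\epsilon)^{\cw}\right)$ algorithm would then solve the CSP in time $O^*\!\left((2^k-2-\epsilon)^n\right)$, contradicting the SETH once $q$ is chosen large enough.

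The heart of the construction is a \emph{variable gadget} per CSP variable $x_i$, residing in a dedicated clique-width label $l_i$. The gadget consists of $k$ vertices eventually placed in $l_i$, with the intended semantics that the set of colors appearing on them is precisely the subset $S\subseteq\{1,\ldots,k\}$ encoding $x_i$'s value. To realize all $2^k-2$ non-empty proper subsets, observe that $S\neq\emptyset$ holds automatically as long as $l_i$ is non-empty, while $S\neq\{1,\ldots,k\}$ can be forced by an auxiliary vertex $u$ with list $\{1,\ldots,k\}$ joined to $l_i$, which becomes uncolorable precisely when all $k$ colors appear in $l_i$. The delicate point, flagged in the introduction, is that the graph induced by $l_i$ must remain edge-less so that every non-empty proper subset $S$ is indeed realizable by some coloring; the \emph{Rename} operation is used to introduce the $k$ vertices into short-lived private labels, attach any necessary shaping constructs via joins or the implication/weak-edge primitives of Section~\ref{sec:gadgets}, and only then rename into $l_i$ without creating edges inside $l_i$.

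For each CSP constraint on variables $x_{i_1},\ldots,x_{i_q}$ I would add a \emph{constraint gadget} that forbids exactly the unsatisfying tuples of subsets. For each forbidden tuple $(S_1,\ldots,S_q)$, introduce $O(q)$ fresh vertices in auxiliary labels, join them appropriately to $l_{i_1},\ldots,l_{i_q}$, and wire them together with implications and weak edges (Lemmata~\ref{lem:weak-color} and~\ref{lem:implication-color}) so that the gadget admits a list coloring iff at least one $l_{i_j}$ deviates from $S_j$, either by missing a color of $S_j$ or by using a color outside $S_j$. Both deviations can be detected through an OR gadget (Lemma~\ref{lem:OR}) over suitably colored probe vertices hanging off $l_{i_j}$. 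The auxiliary labels for each variable and constraint gadget are released (by renaming to a common dump label) as soon as the gadget is complete, so only $O_k(q)$ labels are alive at any time, keeping the overall clique-width at $n+O_k(1)$.

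The main obstacle is ensuring that the subset-of-colors state of each label $l_i$ genuinely ranges over all $2^k-2$ non-empty proper subsets of $\{1,\ldots,k\}$ while the variable and constraint gadgets are active. Once $l_i$ holds its $k$ vertices, any further join only sees the union of colors used and cannot distinguish colorings producing the same subset; conversely, any edge accidentally added inside $l_i$ can forbid some subsets and collapse the state space below $2^k-2$. The reduction must therefore carefully schedule the introduce--join--rename operations so that the shaping of each $l_i$ is complete before its $k$ vertices are merged, and so that subsequent constraint gadgets attach to $l_i$ purely through joins with other labels, never inducing forbidden intra-label edges. Making this scheduling work within $O_k(1)$ auxiliary labels is the key technical step that yields the tight $(2^k-2)^{\cw}$ SETH lower bound.
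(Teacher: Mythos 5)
Your skeleton is the right one and matches the paper's: reduce from \qcsp with $B=2^k-2$ via Theorem~\ref{thm:csp}, encode each CSP value as a non-empty proper subset of $\{1,\ldots,k\}$ realized as the set of colors appearing on a dedicated label, and aim for clique-width $n+O_{k,\epsilon}(1)$ so that the bases of the exponentials line up. The gap is in the constraint gadgets, which is exactly where the difficulty of the theorem lies. A vertex attached to label $l_i$ by a join sees \emph{all} vertices of $l_i$, so the only thing such a probe can certify is that its own color is \emph{absent} from $l_i$; there is no way for a probe joined to the whole label to certify that a given color is \emph{present}. Since your encoding distinguishes values by which colors are present (e.g.\ $\{1\}$ versus $\{1,2\}$), and since a coloring is always free to use \emph{fewer} colors on $l_i$ than intended (say, a single color on all $k$ vertices) while still passing every absence test, the gadget you describe cannot ``forbid exactly the unsatisfying tuples'': it would also accept colorings whose color sets are proper subsets of satisfying ones, and the reduction loses soundness. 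The sentence ``both deviations can be detected through an OR gadget over suitably colored probe vertices hanging off $l_{i_j}$'' is precisely the step that has no implementation with the primitives of Section~\ref{sec:gadgets}.

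The paper resolves this by reversing the direction of information flow and paying for it with repetition. For each occurrence of a constraint, an OR gadget selects one satisfying assignment $a$; implications activated by color $1$ then force a \emph{fresh} pair of independent sets $V_i^{j,a}$ and $U_i^{j,a}$ to use exactly $T(v)$ and its complement $\{1,\ldots,k\}\setminus T(v)$; and the presence test on the label is replaced by joining $U_i^{j,a}$ to the \emph{previously accumulated} $V_i$-vertices of label $i$ --- disjointness from the forced complement is again only an absence test, which joins can perform. Because this check only looks backwards in the expression, the color set of label $i$ is merely monotone non-decreasing rather than pinned down, so the construction repeats every constraint $L=3m(nk+1)$ times and uses monotonicity plus pigeonhole to find a window of $3m$ consecutive indices in which all candidate values are stable, from which a consistent satisfying CSP assignment is extracted. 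Your proposal has no analogue of the complement sets, the one-directional consistency check, or the repetition-and-pigeonhole argument, and without them the soundness direction fails.
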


The proof of Theorem \ref{thm:cw} consists of a reduction from a \csp produced
by Theorem \ref{thm:csp}. Before giving details, let us give some intuition.
Our new instance will be a graph and a clique-width expression with, roughly,
$n$ labels, where $n$ is the number of variables of the \csp instance. The set
of colors used in each label will encode the value given to a variable in a
satisfying assignment. As a result, with $k$ colors, we will have $2^k-2$
encodings available, as every label set uses at least one color, but will never
use all $k$ colors. To verify that these assignments are correct, we will
construct for each constraint an OR gadget which forces the use of color $1$ on
a vertex representing a particular assignment. This assignment dictates the
value of each variable of the constraint, and therefore the set of colors used
in some of our label sets. To verify that the assignment is consistent we use
implication gadgets that force some auxilliary vertices to receive the
complement of the colors dictated by the constraint assignment, and then
connect these with the vertices encoding the true assignment. If the assignment
used is truly consistent, these edges will end up being properly colored.

\begin{figure}
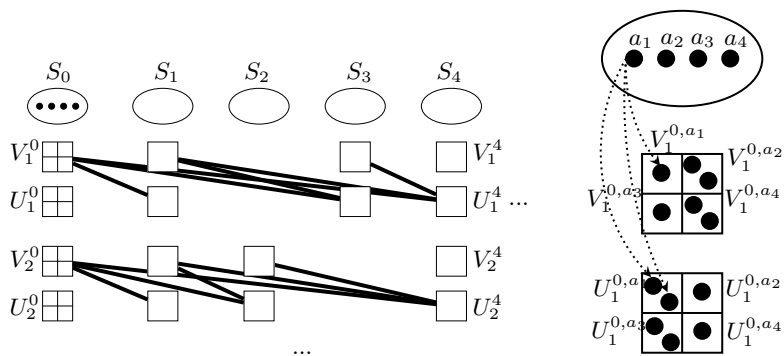


\begin{tabular}{cc}

\input{figures/red1.tex} & \input{figures/red2.tex}

\end{tabular}

\caption{Left: high-level view of the reduction. Rows correspond to variables,
columns to constraints. Here, variable $x_1$ appears in constraints
$c_0,c_1,c_3,c_4$. Right: connections between the OR gadgets OR$(S_j)$  and the
$V_i^j,U_i^j$ sets.  Giving color $1$ to $a_1$ represents selecting this
assignment. This forces the use of some colors in $V_1^{0,a_1}$ and the
complementary set in $U_1^{0,a_1}$.}

\end{figure}

\subparagraph*{Construction.} We are given $k\ge3, \epsilon>0$. Let
$B=2^k-2$. Let $q$ be the smallest integer such that $n$-variable \qcsp does
not admit an $O^*\left((B-\epsilon)^n\right)$ algorithm.  According to Theorem
\ref{thm:csp}, $q$ exists if the SETH is true, and it depends only
on $B,\epsilon$.  Consider an arbitrary $n$-variable instance of \qcsp, call it
$\phi$.  We use the existence of the supposed $O^*\left((2^k-2-\epsilon)^{\cw}\right)$
algorithm to obtain an $O^*\left((B-\epsilon)^n\right)$ algorithm that decides
$\phi$, contradicting the SETH.

We define in some arbitrary way a translation function $T$ which, given a value
$v\in\{1,\ldots,B\}$ returns a non-empty proper subset of $\{1,\ldots,k\}$. We
make sure that $T$ is defined in such a way that it is one-to-one; this is
possible since the number of non-empty proper subsets of $\{1,\ldots,k\}$ is
exactly $B=2^k-2$.

Let $X=\{x_1,\ldots,x_n\}$ be the set of the $n$ variables of the \qcsp
instance and $C=\{c_0,\ldots,c_{m-1}\}$ the set of its $m$ constraints. Let
$L=3m(nk+1)$. We now construct our graph, where if we don't specify the list of
a vertex it can be assumed to be $\{1,\ldots,k\}$. For each
$j\in\{0,\ldots,L-1\}$ we do the following:

\begin{enumerate}

	\item \label{it:s11} Let $j'= j \bmod m$ and let $S$ be the set of
		satisfying assignments of the constraint $c_{j'}$. We construct
		an independent set of vertices $S_j$ that contains a vertex for
		every assignment of $S$. We construct an OR($S_j$) gadget on
		these vertices.

	\item \label{it:s12} For each $x_i$ which appears in $c_{j'}$ and for
		each assignment $a\in S$ we do the following:

			\begin{enumerate}

				\item Let $v\in\{1,\ldots,B\}$ be the value
					given to $x_i$ by the assignment $a$.
					Construct an independent set
					$V_i^{j,a}$  of $|T(v)|$ vertices and
					an independent set $U_i^{j,a}$ of
					$k-|T(v)|$ vertices.  Recall that
					$T(v)$, the translation function,
					returns a set of size between $1$ and
					$k-1$, so both these sets are
					non-empty.

				\item\label{it:s2} For each color $c\in T(v)$
					select a distinct vertex in $V_i^{j,a}$
					and add a $(1 \to c)$-implication
					gadget from the vertex that represents
					the assignment $a$ in $S_{j}$ to this
					vertex of $V_i^{j,a}$.

				\item\label{it:s3} For each color $c\in
					\{1,\ldots,k\}\setminus T(v)$ select a
					distinct vertex in $U_i^{j,a}$ and add
					a $(1 \to c)$-implication gadget from
					the vertex that represents the
					assignment $a$ in $S_{j}$ to this
					vertex of $U_i^{j,a}$.

				\item\label{it:s4} Connect all vertices of
					$U_i^{j,a}$ with all vertices of
					previously constructed sets
					$V_i^{l,a'}$, for all $l< j$ and all
					assignments $a'$. 

			\end{enumerate}

\end{enumerate}

This completes the construction, and we call the constructed \lc instance
$G(\phi)$. The intended meaning is that the sets $V_i^{j,a}$ will use a set of
colors that encodes the value of the variable $x_i$, while the sets $U_i^{j,a}$
will use colors from the complement of this set.

\begin{lemma}\label{lem:cw1}

	If $\phi$ is a satisfiable \qcsp instance, then $G(\phi)$ admits a
	proper list coloring.

\end{lemma}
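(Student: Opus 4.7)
The plan is to translate the satisfying assignment $\sigma : X \to \{1,\ldots,B\}$ of $\phi$ into a list coloring of $G(\phi)$ governed by a single global invariant. Writing $v_i = \sigma(x_i)$, I will arrange that every vertex of every set $V_i^{j,a}$ is colored from $T(v_i)$, while every vertex of every set $U_i^{j,a}$ is colored from the complement $\{1,\ldots,k\}\setminus T(v_i)$. Both classes are non-empty because $T$ always returns a non-empty proper subset of $\{1,\ldots,k\}$, which gives the wiggle room needed.

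The first step is to handle the OR gadgets. For each $j\in\{0,\ldots,L-1\}$ with $j' = j \bmod m$, the restriction $a^*_j$ of $\sigma$ to the variables of $c_{j'}$ satisfies $c_{j'}$ and therefore lies in $S_j$. I will color the vertex of $S_j$ representing $a^*_j$ with color $1$ and invoke Lemma \ref{lem:OR} to extend this to a proper coloring of $\mathrm{OR}(S_j)$; every other representative in $S_j$ will be colored from $\{2,3\}$.

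Next I will color the $V_i^{j,a}$ and $U_i^{j,a}$ sets. For $a = a^*_j$, the value that $a$ assigns to $x_i$ equals $v_i$, so the implications from $S_j$ pair the vertices of $V_i^{j,a^*_j}$ bijectively with the colors of $T(v_i)$ and the vertices of $U_i^{j,a^*_j}$ bijectively with the colors of $\{1,\ldots,k\}\setminus T(v_i)$; since the source of each implication has color $1$, Lemma \ref{lem:implication-color} forces exactly this coloring, which obeys the invariant. For $a\neq a^*_j$, the source of every implication out of $S_j$ is colored from $\{2,3\}$ rather than $1$, so by Lemma \ref{lem:implication-color} these implications impose no constraint on the targets; I will simply color each vertex of $V_i^{j,a}$ with an arbitrary color from $T(v_i)$ and each vertex of $U_i^{j,a}$ with an arbitrary color from $\{1,\ldots,k\}\setminus T(v_i)$.

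To conclude, I will verify correctness. Each $\mathrm{OR}(S_j)$ is properly colored by Lemma \ref{lem:OR}; each implication and weak-edge gadget is activated or deactivated correctly, so Lemma \ref{lem:implication-color} lets me extend the coloring to all their internal vertices; and the only remaining edges are the joins between $U_i^{j,a}$ and $V_i^{l,a'}$ with $l<j$, which are properly colored because the invariant places their endpoints in the disjoint color classes $\{1,\ldots,k\}\setminus T(v_i)$ and $T(v_i)$. The main subtlety I expect is the asymmetry between $a = a^*_j$ (colors forced by implications) and $a\neq a^*_j$ (colors chosen freely): the invariant is designed to absorb this asymmetry, and the slack available in the $a\neq a^*_j$ case is precisely what makes the construction work despite the sizes of $V_i^{j,a}$ and $U_i^{j,a}$ varying with $a$.
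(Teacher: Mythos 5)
Your proposal is correct and follows essentially the same route as the paper's proof: the same invariant (color $V_i^{j,a}$ from $T(v_i)$ and $U_i^{j,a}$ from its complement), the same use of color $1$ on the vertex of $S_j$ representing the restriction of $\sigma$, and the same case split between activated and non-activated implications, with Lemmata \ref{lem:OR} and \ref{lem:implication-color} invoked at the same points.
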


\begin{proof}

Suppose that we have a satisfying assignment for $\phi$ which gives value $v_i$
	to variable $x_i$. The invariant we will maintain is that for all
	$j,a$, all vertices of sets $V_i^{j,a}$ will use only colors from
	$T(v_i)$, while all vertices of sets $U_i^{j,a}$ will use only colors
	from $\{1,\ldots,k\}\setminus T(v_i)$. As a result, all edges added in
	step \ref{it:s4} will be properly colored. The rest of the graph will
	be easy to color if we respect the informal meaning of OR and
	implication gadgets.
	
	More specifically, for each OR($S_j$) gadget we let $j'=j\bmod m$ and
	consider the constraint $c_{j'}$. The supposed assignment to $\phi$
	assigns to the variables of the constraint values consistent with a
	satisfying assignment $a$ of $c_{j'}$. We give color $1$ to the
	corresponding vertex of $S_{j}$. We use colors $\{2,3\}$ to color all
	remaining vertices of the OR gadget. Note that the OR gadget is
	connected to the rest of the graph only through implication gadgets
	activated by color $1$. Hence, by Lemma \ref{lem:implication-color} we
	can remove all non-activated implication gadgets.  For the remaining,
	activated implication gadgets we color their other endpoints, which are
	found in the sets $V_i^{j,a}$ and $U_i^{j,a}$ with the unique viable
	color. For every other assignment $a'\neq a$ we color all vertices of
	$V_i^{j,a'}$ using a color we used in $V_i^{j,a}$, and the vertices of
	$U_i^{j,a'}$ using a color we used in $U_i^{j,a}$. 
	
	The promised invariant is maintained, as the vertices of $V_i^{j,a}$
	are forced to receive colors from $T(v_i)$, while vertices of
	$U_i^{j,a}$ are forced to receive colors from the complementary set.
	Thus, all edges of step \ref{it:s4} are properly colored, and since we
also properly colored the OR gadgets and implication gadgets, we have a proper
coloring of the whole graph.  \end{proof}

\begin{lemma}\label{lem:cw2}

	If  $G(\phi)$ admits a proper list coloring, then $\phi$ is a
	satisfiable \qcsp instance.

\end{lemma}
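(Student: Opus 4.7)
The plan is to reverse the direction of Lemma~\ref{lem:cw1}: starting from a proper list coloring of $G(\phi)$, I will extract a satisfying assignment of $\phi$. First, Lemma~\ref{lem:OR} applied to each OR gadget forces, at every layer $j\in\{0,\ldots,L-1\}$, some vertex $a_j\in S_j$ to take color~$1$; by construction $a_j$ corresponds to a satisfying assignment of $c_{j'}$ with $j'=j\bmod m$. Next, Lemma~\ref{lem:implication-color} applied to the implications of step~(\ref{it:s2}) pins, for every $x_i$ in $c_{j'}$ and every $c\in T(a_j(x_i))$, a distinct vertex of $V_i^{j,a_j}$ to color $c$; since $|V_i^{j,a_j}|=|T(a_j(x_i))|$, this exhausts the set, and the colors appearing on $V_i^{j,a_j}$ are precisely $T(a_j(x_i))$. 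Symmetrically, via step~(\ref{it:s3}), the colors on $U_i^{j,a_j}$ are exactly $\{1,\ldots,k\}\setminus T(a_j(x_i))$.

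The crucial step exploits the join edges of step~(\ref{it:s4}) to propagate these local constraints across layers. For any $l<j$ and any $x_i$ appearing in both $c_{l\bmod m}$ and $c_{j\bmod m}$, every vertex of $V_i^{l,a_l}$ is joined to every vertex of $U_i^{j,a_j}$, so the corresponding color sets must be disjoint, forcing $T(a_l(x_i))\subseteq T(a_j(x_i))$. Reading $T(a_j(x_i))$ along the layers in which $x_i$ appears therefore yields a non-decreasing chain in the lattice $2^{\{1,\ldots,k\}}\setminus\{\emptyset,\{1,\ldots,k\}\}$, whose height is $k-1$; by injectivity of $T$ the value $a_j(x_i)$ can strictly change at most $k-2$ times along that sub-sequence.

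Finally, a pigeonhole argument extracts a consistent global assignment. Summing over all variables, the set $C$ of ``jump'' layers at which some $a_j(x_i)$ strictly increases satisfies $|C|\le n(k-2)$, so the non-jump layers form at most $n(k-2)+1\le nk$ maximal consecutive runs. Since $L=3m(nk+1)$, the longest such run has length at least $m$ (indeed at least $3m-1$). Any window of $m$ consecutive layers hits every constraint $c_{j'}$ (because $j\bmod m$ cycles through $\{0,\ldots,m-1\}$), so inside this run each $c_{j'}$ is satisfied by some $a_j$ and each variable $x_i$ takes the same value $v_i$ at every one of its appearances in the run. The assignment $\sigma(x_i):=v_i$ (defined arbitrarily for variables absent from the window) then satisfies all constraints of $\phi$. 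The main obstacle is this last pigeonhole step: one must choose $L$ large enough to guarantee a stable window that covers a full round through $c_0,\ldots,c_{m-1}$, which is why the factor $3m(nk+1)$ is needed; every other step is a direct application of Lemmata~\ref{lem:OR} and~\ref{lem:implication-color} combined with the subset-relation imposed by step~(\ref{it:s4}).
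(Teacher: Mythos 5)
Your proof is correct, and while it shares the paper's two key ingredients (the monotonicity forced by the step~(\ref{it:s4}) joins, and the pigeonhole over $L=3m(nk+1)$ layers), the bookkeeping and the extraction step are genuinely different. The paper defines the candidate value at layer $j$ via the \emph{union} $\mathbf{c}\bigl(\bigcup_a V_i^{j,a}\bigr)$ of colors over all assignment copies -- a set that is only indirectly controlled, since non-selected copies $V_i^{j,a'}$ carry no activated implications -- and therefore has to finish with a proof by contradiction that sandwiches a hypothetical bad layer $j_2$ between two congruent layers $j_1<j_2<j_3$ inside the stable window (this is the reason the window must have length $3m$ rather than $m$). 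You instead track only the selected assignment $a_j$ at each layer, for which $\mathbf{c}(V_i^{j,a_j})=T(a_j(x_i))$ and $\mathbf{c}(U_i^{j,a_j})$ is exactly the complement, so the join of $U_i^{j,a_j}$ to $V_i^{l,a_l}$ directly yields the chain $T(a_l(x_i))\subseteq T(a_j(x_i))$ and a stable window of just $m$ layers already hands you a consistent satisfying assignment with no contradiction argument needed. Your route is slightly cleaner in one further respect: since $T(a_j(x_i))$ is by definition a non-empty proper subset, you never need to worry about whether the tracked color set lies in the image of $T$, whereas the paper's $T^{-1}(\mathbf{c}(V_i^j))$ implicitly relies on that. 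The only price is that your argument uses strictly less of the slack in $L$ (a window of length $m$ instead of $3m$), which is harmless.
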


\begin{proof} Suppose we have a list coloring of $G(\phi)$ given by the
function $\mathbf{c}:V\to\{1,\ldots,k\}$.  For a set $V'\subseteq V$ we will
write $\mathbf{c}(V')$ to denote the set of colors used by $\mathbf{c}$ for
vertices of $V'$, that is, $\mathbf{c}(V') = \{ c\ |\ \exists u\in V',\
\mathbf{c}(u)=c\ \}$.  Let $j\in\{0,\ldots,L-1\}$, $j' = j\bmod m$, and $S$ be
the set of satisfying assignment of the constraint $c_{j'}$, which contains a
variable $x_i$. Consider the set $V_i^j = \cup_{a\in S} V_i^{j,a}$.  We define
the \emph{candidate assignment} of $x_i$ at index $j$ as
$v_i^j:=T^{-1}(\mathbf{c}(V_i^j))$. In other words, to obtain the candidate
assignment for $x_i$ at index $j$, we take the union of all colors used in
$V_i^{j,a}$, and then translate this set back into a value in $\{1,\ldots,B\}$. 
	
	We observe that for all $i,j$, such that $x_i$ appears in $c_{j'}$,
	where $j'=j\bmod m$, there exists an assignment $a$ such that
	$\mathbf{c}(V_i^{j,a}) = \{1,\ldots,k\}\setminus
	\mathbf{c}(U_i^{j,a})$.  To see this, note that by Lemma \ref{lem:OR},
	one of the vertices of the OR($S_j$) gadget must have received color
	$1$, say the vertex that corresponds to assignment $a$.  All the
	implications incident on this vertex are therefore activated, which
	means that, if $a$ gives value $v\in\{1,\ldots,B\}$ to $x_i$, then
	$\mathbf{c}(V_i^{j,a})=T(v)$ and
	$\mathbf{c}(U_i^{j,a})=\{1,\ldots,k\}\setminus T(v)$ (because of the
	implications of steps \ref{it:s2},\ref{it:s3} and Lemma
	\ref{lem:implication-color}). 
	
	A key observation now is the following: for all $j_2>j_1$ and for all
	$i$ such that variable $x_i$ appears in constraints $c_{j_1'},c_{j_2'}$
	with $j_1' = j_1\bmod m$, $j_2'=j_2\bmod m$, we have
	$\mathbf{c}(V_i^{j_1}) \subseteq \mathbf{c}(V_i^{j_2})$.  In other
	words, the set of colors used in $V_i^j$ can only increase as $j$
	increases. To see this, suppose that there exists $c\in
	\mathbf{c}(V_i^{j_1})\setminus \mathbf{c}(V_i^{j_2})$. As argued in the
	previous paragraph, there exists an assignment $a_2$ such that
	$\mathbf{c}(V_i^{j_2}) \supseteq \mathbf{c}(V_i^{j_2,a_2})=
	\{1,\ldots,k\}\setminus \mathbf{c}(U_i^{j_2,a_2})$.  Because $c\not\in
	\mathbf{c}(V_i^{j_2,a_2})$ we must have $c\in
	\mathbf{c}(U_i^{j_2,a_2})$, but because of step \ref{it:s4}, all
	vertices of $U_i^{j_2,a_2}$ are connected to all of $V_i^{j_1}$. Since
	$c\in\mathbf{c}(V_i^{j_1})$, this contradicts the correctness of the
	coloring.

	The property established in the previous paragraph implies that for
	each $i$ there exist at most $k$ distinct candidate assignments $v_i^j$
we can obtain for different values of $j$, as each assignment is obtained by
translating the set of colors used in $V_i^j$, this set only increases, it
always contains at least one color and at most $k$ colors.  Let us say that an
index $j_1$ is problematic if, for some $i\in\{1,\ldots,n\}$ we have the
following: $x_i$ appears in constraint $c_{j_1'}$, where $j_1' = j_1\bmod m$;
and if $j_2$ is the minimum index such that $j_2>j_1$ and $x_i$ appears in
constraint $c_{j_2'}$, where $j_2'=j_2\bmod m$, then $v_i^{j_1}\neq v_i^{j_2}$.
In other words, an index is problematic if the candidate assignment it produces
for a variable disagrees with the candidate assignment produced for the same
variable in the next index that involves this variable. It is not hard to see
that there are at most $kn$ problematic indices, because for each variable
there are at most $k$ problematic indices.  Therefore, since $L = 3m(nk+1)$, by
pigeonhole principle, there exists an interval $L'$ of at least $3m$
consecutive non-problematic indices.

	We now obtain an assignment for the original instance as follows: for
	each variable $i$ we take an index  $j\in L'$ such that $x_i$ appears
	in constraint $c_{j'}$, where $j'=j\bmod m$,  and give $x_i$ the
	candidate value $v_i^j$ from that index.  Observe that, by the
	definition of $L'$ the index we select is irrelevant, as all candidate
	values are constant throughout the interval $L'$. 
	
	We claim that this is a satisfying assignment.  Suppose not, so there
exists an unsatisfied constraint $c_{j'}$.  Because $L'$ contains $3m$
consecutive indices, there exists three indices $j_1<j_2<j_3\in L'$ such that
$j' = j_1 \bmod m = j_2\bmod m = j_3\bmod m$.  We observe that for all
variables $x_i$ appearing in $c_{j'}$ we have given value $v_i^{j_2}$, that is
the candidate value obtained at index $j_2$, since all indices in $L'$ give the
same candidate values to all variables. 
	
	Now, there exists a vertex in $S_{j_2}$ that received color $1$,
	representing an assignment $a$.  If the assignment we produced is not
	consistent with $a$, there exists a variable $x_i$ such that we have
	given $x_i$ value $v=v_i^{j_2}$, while $a$ gives it value $v'\neq v$.
	Consider now the set $V_i^{j_2,a}$.  Because of the implication
	gadgets, it uses the colors $T(v')\neq T(v)$. If there exists $c\in
	T(v)\setminus T(v')$ then $c\in \mathbf{c}(U_i^{j_2,a})$.  But
	$U_i^{j_2,a}$ is connected to all vertices of $V_i^{j_1}$, which, we
	assumed use all colors of $T(v)$, therefore also color $c$,
	contradicting the correctness of the coloring. If on the other hand
	there exists $c\in T(v')\setminus T(v)$, then since
	$\mathbf{c}(V_i^{j_3}) = T(v)$, there exists an $a'$ such that
	$\mathbf{c}(U_i^{j_3,a'}) = \{1,\ldots,k\}\setminus T(v)$. Therefore,
	$c\in \mathbf{c}(U_i^{j_3,a'})$, while $c\in \mathbf{c}(V_i^{j_2,a})$,
and by step \ref{it:s4} these sets are connected, again obtaining a
contradiction.  We therefore conclude that we must have a consistent satisfying
assignment.  \end{proof}

\begin{lemma}\label{lem:cw3}

	$G(\phi)$ can be constructed in time polynomial in $|\phi|$, and we
have $\cw(G(\phi)) \le n + O(qk^2B^q) = n + f(\epsilon,k)$ for some function $f$.

\end{lemma}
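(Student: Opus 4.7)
The plan is to verify the two claims in turn. Polynomial construction time is routine: each of the $L = 3m(nk+1)$ indices produces a gadget whose size depends only on $q$ and $k$---at most $B^q$ vertices in $S_j$, as many internal vertices on the OR-gadget path, at most $qkB^q$ vertices in the sets $V_i^{j,a},U_i^{j,a}$, and $O(qk^2B^q)$ internal vertices inside the implication weak edges---so $G(\phi)$ has polynomially many vertices and edges, which can clearly be listed in polynomial time (recalling that $q$ depends only on $\epsilon$ and $k$ via Theorem~\ref{thm:csp}).

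For the clique-width bound I would build $G(\phi)$ by processing the constraint gadgets in order of $j$ with three groups of labels: $n$ \emph{permanent} labels $\ell_1,\ldots,\ell_n$ (one per variable of $\phi$), a single \emph{sink} label for vertices that have already received all their intended edges, and $O(qk^2B^q)$ \emph{temporary} labels that will be recycled between successive gadgets. The invariant to maintain is that after all indices $\le j$ have been handled, label $\ell_i$ contains exactly $\bigcup_{l\le j,\,a} V_i^{l,a}$, while every other previously introduced vertex sits on the sink label. This invariant makes step~\ref{it:s4} straightforward to realise: to add the edges between a newly constructed $U_i^{j,a}$ and every previous $V_i^{l,a'}$ with $l<j$ it suffices to Join the temporary label holding $U_i^{j,a}$ with $\ell_i$. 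A single gadget can then be built by introducing each of its $O(qk^2B^q)$ vertices under its own fresh temporary label, adding all internal edges (of the OR gadget, the implications, and the weak edges), performing all step~\ref{it:s4} joins against the labels $\ell_i$, and finally Renaming each $V_i^{j,a}$-label to $\ell_i$ and every other temporary label to the sink.

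The delicate point, and what I expect to be the main obstacle to get right, is the ordering of these operations inside one gadget: the Renames that merge $V_i^{j,a}$ into $\ell_i$ must occur strictly \emph{after} all step~\ref{it:s4} joins for index $j$, for otherwise a $V$-vertex of the current index would acquire spurious edges to a $U_i^{j,a''}$ of the same index through $\ell_i$, in violation of the strict inequality $l<j$ in step~\ref{it:s4}. Because each gadget has bounded size in $\epsilon$ and $k$, this ordering is easy to respect by simply performing all joins before any Rename inside the gadget. The resulting clique-width expression uses at most $n + O(qk^2B^q) + 1$ labels at any moment, yielding $\cw(G(\phi)) \le n + f(\epsilon,k)$ for an appropriate function $f$ depending only on $\epsilon$ and $k$.
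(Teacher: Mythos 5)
Your proposal is correct and follows essentially the same route as the paper's proof: the same partition of labels into $n$ per-variable labels, a junk/sink label, and $O(qk^2B^q)$ recyclable work labels, the same invariant that label $i$ holds exactly the already-built sets $V_i^{l,a}$, and the same ordering discipline of performing all step~\ref{it:s4} joins before renaming the current $V_i^{j,a}$ into label $i$. The only cosmetic difference is that you introduce all of a gadget's vertices at once under distinct temporary labels, whereas the paper processes the $q$ variables of a constraint one after another, but both fit within the same label budget.
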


\begin{proof}

For fixed $k\ge 3,\epsilon>0$, we have that $B=2^k-2$ and $q$ is a constant
	that only depends on $B,\epsilon$ (that is, on $k,\epsilon$). Each
	constraint of the \qcsp instance has at most $B^q$ satisfying
	assignments. Therefore, it is not hard to see that the whole
	construction can be performed in polynomial time, if $k,\epsilon,B,q$
are constants.  For clique-width we use the following labels:

	\begin{enumerate}

		\item $n$ main labels, representing the variables of $\phi$.

		\item A single junk label. Its informal meaning is that a
			vertex that receives this label will not be connected
			to anything else not yet introduced in the graph.

		\item $O(B^q)$ constraint work labels.

		\item $O(qk^2B^q)$ variable-constraint incidence work labels.

	\end{enumerate}

	To give a clique-width expression we will describe how to build the
	graph, following essentially the steps given in the description of the
	construction by maintaining the following invariant: before starting
	iteration $j$, all vertices of the set $\bigcup_{j'<j}V_i^{j',a}$
	(where we take the union over all assignments $a$), have label $i$, and
	all other vertices have the junk label.

	This invariant is vacuously satisfied before the first iteration, since
	the graph is empty. Suppose that for some $j\in\{0,\ldots,L-1\}$ the
	invariant is true. We use the $O(B^q)$ constraint work labels to
	introduce the vertices of the OR($S_j$) gadget of step \ref{it:s11},
	giving each vertex a distinct label. We use join operations to
	construct the internal edges of the OR gadget. 
	
	Then, for each variable $x_i$ that appears in the current constraint we
	do the following: we use $O(k^2B^q)$ of the variable-constraint
	incidence work labels to introduce the vertices of $V_i^{j,a}$,
	$U_i^{j,a}$ as well as the implication gadgets connecting these to
	$S_j$. Again we use a distinct label for each vertex, but the number of
	vertices (including internal vertices of the implication gadgets) is
	$O(k^2B^q)$, so we have sufficiently many labels to use distinct labels
	for each of the $q$ variables of the constraint.  We use join
	operations to add the edges inside all implication gadgets.  Then we
	use join operations to connect $U_i^{j,a}$ to all vertices
	$\bigcup_{j'<j}V_i^{j',a}$, for $j'<j$.  This is possible, since the
	invariant states that all the vertices of $\bigcup\limits_{j'<j
	}V_i^{j',a}$ have the same label.  We then rename all the vertices of
	$U_i^{j,a}$, for all $a$ to the junk label, and do the same also for
	internal vertices of all implication gadgets.   We proceed to the next
	variable of the same constraint and handle it using its own $O(k^2B^q)$
	labels.  Once we have handled all variables of the current constraint,
	we rename all vertices of each $V_i^{j,a}$ to label $i$ for all $a$.
	We then rename all vertices of the OR($S_j$) gadget to the junk label
	and increase $j$ by $1$. It is not hard to see that we have maintained
the invariant and constructed all edges induced by the vertices introduced in
steps up to $j$, so repeating this process constructs the graph.  \end{proof}

\begin{proof}[Theorem \ref{thm:cw}]

	The proof now follows from the described construction, Theorem
	\ref{thm:csp}, and Lemmata \ref{lem:list}, \ref{lem:cw1},
	\ref{lem:cw2}, \ref{lem:cw3}. In particular, given any fixed $k\ge3,
	\epsilon>0$, our construction, in combination with Lemma \ref{lem:list}
	and \ref{lem:cw1} produces a $k$-\kc instance of clique-width
	$n+f(k,\epsilon)$ and size polynomial in the original \qcsp instance.
	If there exists an algorithm running in
	$O^*\left((2^k-2-\epsilon)^{\cw}\right)$ for $k$-\kc, since we have set
	$B=2^k-2$, this immediately gives a $O^*\left((B-\epsilon)^n\right)$
	algorithm for \qcsp. However, we assumed that $q$ is sufficiently large
	that the existence of such an algorithm would contradict the SETH.
\end{proof}

\section{Modular Pathwidth and ETH}\label{sec:mtw}

We present a lower bound on the complexity of $k$-\kc
parameterized by modular pathwidth. Specifically, we show that, under the SETH,
no algorithm can solve $k$-\kc in $O^*\left( ( {k\choose \lfloor
k/2\rfloor}-\epsilon)^{\mpw} \right)$. This bound is somewhat weaker than the
one in Theorem \ref{thm:cw}, which is natural since modular pathwidth is more
restrictive than clique-width. As we see in Section \ref{sec:algs}, however,
this bound is tight. 
We complete
this section by performing the same reduction with the ETH, rather than the
SETH, as a starting point.  Under this weaker assumption we prove that $k$-\kc
does not admit an algorithm running in $n^{o(\mpw)}$, even when $k=O(\log n)$,
which implies that the problem does not admit an algorithm running in
$2^{o(k\cdot\mpw)}$. This is tight, and also applies to clique-width (Lemma
\ref{lem:parameters}). In this way, our reduction gives an alternative proof
that $k$-\kc is unlikely to be FPT parameterized by clique-width, even in
instances with a logarithmic number of colors.

\subsection{SETH-based Lower Bound}

\begin{theorem}\label{thm:mpw}

For any $k\ge3, \epsilon>0$, if there exists an algorithm solving $k$-coloring
in time $O^*\left(({k\choose \lfloor k/2\rfloor}-\epsilon)^{\mpw}\right)$,
where $\mpw$ is the graph's modular pathwidth, then the SETH is false.

\end{theorem}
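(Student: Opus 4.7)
The plan is to mirror the proof of Theorem \ref{thm:cw}, reducing from $q$-\csp-$B$ with $B={k\choose \lfloor k/2\rfloor}$ via Theorem \ref{thm:csp} and producing a $k$-\kc instance whose modular pathwidth is $n + O_{k,\epsilon}(1)$. The crucial modification is the translation function: we now let $T$ be a bijection from $\{1,\ldots,B\}$ to the set of all subsets of $\{1,\ldots,k\}$ of cardinality exactly $\lfloor k/2\rfloor$ (which has size $B$ by definition). Consequently every set $V_i^{j,a}$ built by the reduction has exactly $\lfloor k/2\rfloor$ vertices and every $U_i^{j,a}$ has exactly $\lceil k/2\rceil$ vertices, giving the construction a uniform appearance across all encoded values.

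The forward direction of correctness (Lemma \ref{lem:cw1} analogue) transfers verbatim. For the converse (Lemma \ref{lem:cw2} analogue), the candidate-assignment argument is in fact tighter here: because the monotonicity inclusion $\mathbf{c}(V_i^{j_1}) \subseteq \mathbf{c}(V_i^{j_2})$ now relates two sets of equal cardinality $\lfloor k/2\rfloor$, it forces equality. Thus each variable admits only one candidate value throughout the entire interval of valid indices, and the pigeonhole step that isolates a stable window of non-problematic indices becomes essentially trivial.

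The main obstacle, and the only substantial deviation from Section \ref{sec:cw-red}, is bounding modular pathwidth rather than clique-width. The idea is that, because all $V_i^{j,a}$ have identical size and the outward adjacency into previously-processed $V_i^{l,a'}$ (via $U_i^{j,a}$) is uniform across $a$, after routing implication and OR gadgets through Lemma \ref{lem:weak-pw} (which adds only a constant to the pathwidth) many vertices merge into common twin classes of $G^t$: for each fixed $(i,j)$ the sets $V_i^{j,a}$ and $U_i^{j,a}$ collapse down to a number of twin representatives depending only on $k$ and $q$. I can then construct a modular path decomposition processing the columns $j=0,\ldots,L-1$ in order whose bag at column $j$ keeps, for each currently active variable $i$, one representative vertex together with the $O(qk^2B^q)$ local gadgetry of the current OR constraint, yielding $\mpw(G(\phi)) \le n + f(k,\epsilon)$. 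Applying Lemma \ref{lem:list} (whose modular-pathwidth version explicitly preserves twinhood when twins share a list, as holds in our construction) adds only $k$, and then combining the resulting bound with the supposed $O^*\bigl(({k\choose \lfloor k/2\rfloor}-\epsilon)^{\mpw}\bigr)$ algorithm gives an $O^*((B-\epsilon)^n)$ algorithm for $q$-\csp-$B$, contradicting Theorem \ref{thm:csp} and hence the SETH.
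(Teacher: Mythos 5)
Your correctness analysis is fine as far as it goes, but the width bound is where the proposal breaks, and it breaks because you kept too much of the clique-width construction. If you retain the per-index sets $V_i^{j,a}$ together with step \ref{it:s4}'s connections (every $U_i^{j,a}$ joined to all earlier $V_i^{l,a'}$ with $l<j$), then for a fixed variable $i$ the sets $\{V_i^{l,\cdot}\}_l$ and $\{U_i^{j,\cdot}\}_j$ form a half-graph: $V_i^{l,\cdot}$ is adjacent to $U_i^{j,\cdot}$ exactly when $l<j$. Vertices of $V_i^{l,\cdot}$ for different $l$ therefore have different neighborhoods and are \emph{not} twins (the attached weak-edge internal vertices already destroy most twinhood, but even ignoring the gadgets the cross-index adjacencies do), so they do not collapse in $G^t$. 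Taking all representatives with $l<L/2$ on one side and all with $j\ge L/2$ on the other yields a complete bipartite subgraph of $G^t$ on $\Omega(L)$ vertices per side, so $\mpw(G(\phi))=\Omega(L)=\Omega(mnk)$, not $n+f(k,\epsilon)$. Your proposed bag-by-bag decomposition cannot exist. (Note also that Lemma \ref{lem:weak-pw} bounds pathwidth after gadget replacement, whereas modular pathwidth is defined via the quotient of the \emph{actual} graph, so it cannot simply be invoked to ``collapse'' gadget vertices into twin classes.)

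The paper's construction avoids this entirely and is substantially simpler than Theorem \ref{thm:cw}: for each variable $x_i$ there is a \emph{single clique} $V_i$ on $\lfloor k/2\rfloor$ vertices (no index $j$, no repetition parameter $L$, no step-\ref{it:s4} chain to earlier copies), and every $U_i^{j,a}$ is joined to all of $V_i$. The clique forces any proper coloring to use exactly $\lfloor k/2\rfloor$ distinct colors on $V_i$, so $T^{-1}(\mathbf{c}(V_i))$ is immediately well-defined and the candidate-assignment/monotonicity/pigeonhole machinery disappears. Crucially, the vertices of $V_i$ are genuine true twins of $G$ (clique plus identical outside neighborhood, and no gadget vertices attached to them), so each $V_i$ contributes one vertex to $G^t$; deleting these $n$ vertices leaves a disjoint union of per-constraint components consisting of an OR-gadget path with pendant $U$-vertices, which has constant pathwidth. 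That is how $\mpw\le n+O(1)$ is obtained. Your proposal is missing exactly this restructuring; the candidate-assignment refinement you describe, while correct, does not compensate for it.
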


As in Theorem \ref{thm:cw}, we begin our reduction from a \qcsp instance, where
the alphabet size $B$ is equal to the base of the exponential in our lower
bound. The intuition will be that the ``important'' vertices of the bags in a
modular tree decomposition of our graph will correspond to classes of $\lfloor
k/2\rfloor$ true twin vertices.  The set of $\lfloor k/2\rfloor$ colors used to
color them will encode the value of one variable of the original instance. 


\subparagraph*{Construction.} We are given some $k\ge3, \epsilon>0$. Let
$B={k\choose \lfloor k/2\rfloor}$.  Let $q$ be the smallest integer such that
$n$-variable \qcsp does not admit an $O^*\left((B-\epsilon)^n\right)$
algorithm.  Consider an arbitrary $n$-variable instance of \qcsp, call it
$\phi$.  We use the existence of the supposed algorithm to obtain an
$O^*\left((B-\epsilon)^n\right)$ algorithm that decides $\phi$, contradicting
the SETH.

As in Theorem \ref{thm:cw} we define a one-to-one translation function $T$.
This time, when $T$ is given as input a value $v\in\{1,\ldots,B\}$, it returns
a subset of $\{1,\ldots,k\}$ of cardinality $\lfloor k/2\rfloor$.  Let
$X=\{x_1,\ldots,x_n\}$ be the set of the $n$ variables of the \qcsp instance
and $C=\{c_0,\ldots,c_{m-1}\}$ the set of its $m$ constraints. We construct our
graph $G(\phi)$ as follows, where if we don't specify a list for a vertex its
list is $\{1,\ldots,k\}$:

\begin{enumerate}

\item For each variable $i\in\{1,\ldots,n\}$ we construct a clique $V_i$ on
$\lfloor k/2\rfloor$ vertices.

\item For each $j\in\{0,\ldots,m-1\}$, let $S$ be the set of satisfying
assignments of the constraint $c_j$. We construct an independent set $S_j$ with
one vertex for each element of $S$. We construct an OR($S_j$) gadget on the set
$S_j$.

\item For each $j\in\{0,\ldots,m-1\}$, each satisfying assignment $a$ for the
constraint $c_j$, and each variable $x_i$ appearing in $c_j$ we do the
following:

	\begin{enumerate}

	\item We construct a set $U_i^{j,a}$ of $\lceil k/2\rceil$ vertices.

	\item Suppose $a$ assigns value $v$ to $x_i$. For each color
$c\in\{1,\ldots,k\}\setminus T(v)$ we select a vertex of $U_i^{j,a}$. We
construct a $(1\to c)$-implication gadget from the vertex representing $a$ in
$S_j$ to this vertex of $U_i^{j,a}$.

	\item We connect all vertices of $U_i^{j,a}$ with all vertices of
$V_i$.

	\end{enumerate}

\end{enumerate}


\begin{lemma}\label{lem:mpw1}

	If $\phi$ is a satisfiable \qcsp instance, then $G(\phi)$ admits a
	proper list coloring.

\end{lemma}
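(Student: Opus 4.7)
\textbf{Proof plan for Lemma \ref{lem:mpw1}.} The plan is to construct a proper list coloring of $G(\phi)$ directly from a satisfying assignment $\sigma$ of $\phi$, where $\sigma$ assigns value $v_i$ to variable $x_i$. The guiding invariant will be that every clique $V_i$ uses exactly the color set $T(v_i)$, while every vertex in any set $U_i^{j,a}$ uses a color from $\{1,\ldots,k\}\setminus T(v_i)$; this will immediately make all edges between $V_i$ and the $U_i^{j,a}$'s properly colored, and reduce the problem to showing that the OR gadgets and implication gadgets can be consistently colored.

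First, I would color each clique $V_i$ by assigning its $\lfloor k/2\rfloor$ vertices a bijection onto the $\lfloor k/2\rfloor$ colors of $T(v_i)$; this is valid since $V_i$ is a clique whose size equals $|T(v_i)|$. Next, for each constraint $c_j$, because $\sigma$ satisfies $c_j$, there is an assignment $a_j \in S$ consistent with $\sigma$ restricted to the variables of $c_j$; I would give color $1$ to the vertex of $S_j$ representing $a_j$ and, invoking Lemma \ref{lem:OR}, extend this to a proper list coloring of the internal vertices of OR($S_j$) using only colors $\{2,3\}$, while all other vertices of $S_j$ receive a color from $\{2,3\}$.

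Now I would color the sets $U_i^{j,a}$. For $a = a_j$, the corresponding vertex of $S_j$ has color $1$, so every implication gadget attached to it is activated; by the construction in step 3(b), these implications select one vertex of $U_i^{j,a_j}$ per color $c \in \{1,\ldots,k\}\setminus T(v_i)$ and force it to receive $c$. Since $|U_i^{j,a_j}| = \lceil k/2\rceil = |\{1,\ldots,k\}\setminus T(v_i)|$, this assigns each vertex of $U_i^{j,a_j}$ a distinct color from the complement of $T(v_i)$, maintaining the invariant. For $a \neq a_j$, the vertex of $S_j$ representing $a$ has a color from $\{2,3\}$, so none of its implication gadgets are activated; I would then color every vertex of $U_i^{j,a}$ with an arbitrary color from $\{1,\ldots,k\}\setminus T(v_i)$ (any fixed such color works, since $U_i^{j,a}$ is independent), again preserving the invariant.

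Finally, I would extend the coloring to the internal vertices of all implication gadgets using Lemma \ref{lem:implication-color}: for each $(1\to c)$-implication from a vertex $u\in S_j$ to some vertex $w \in U_i^{j,a}$, either $u$ does not have color $1$ (when $a \neq a_j$) or $w$ has the required color $c$ (when $a = a_j$), so in both cases the extension condition of the lemma holds. The main subtlety — and the one point worth checking carefully — is the compatibility between the forced colors on $U_i^{j,a_j}$ and the coloring of $V_i$; this works precisely because the implications force exactly the $\lceil k/2 \rceil$ colors outside $T(v_i)$, which is why the translation function $T$ outputs subsets of size $\lfloor k/2 \rfloor$. Combining all three steps yields a proper list coloring of $G(\phi)$.
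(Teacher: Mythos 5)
Your proposal is correct and follows essentially the same route as the paper's proof: color each $V_i$ with $T(v_i)$, give color $1$ to the vertex of $S_j$ representing the assignment consistent with $\sigma$, let the activated implications force $\{1,\ldots,k\}\setminus T(v_i)$ onto $U_i^{j,a_j}$, and color the non-activated sets $U_i^{j,a}$ arbitrarily from the same complement. The extra care you take with the internal vertices of the gadgets (via Lemmata \ref{lem:OR} and \ref{lem:implication-color}) is exactly what the paper leaves implicit.
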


\begin{proof}

Suppose we have a satisfying assignment to $\phi$. If the assignment gives
value $v_i$ to variable $x_i$, we use the colors of $T(v_i)$ to color $V_i$, in
some arbitrary way. For each constraint $c_j$, the assignment gives values to
the variables of $c_j$ consistent with some satisfying assignment $a$ of the
constraint. We give color $1$ to the vertex of $S_j$ representing $a$, and use
colors $2,3$ to color the rest of $S_j$. The only implication gadgets activated
in this way are those incident on the vertex representing $a$; we give to their
other endpoint, which is found in $U_i^{j,a}$, the only viable color. For all
other $a'\neq a$ we give to vertices of $U_i^{j,a'}$ a color that we used in
$U_i^{j,a}$. We claim that this is a proper coloring because the colors we used
in $U_i^{j,a}$ are $\{1,\ldots,k\}\setminus T(v_i)$, while the colors we used
in $V_i$ are $T(v_i)$, hence all edges between $V_i$ and $U_i^{j,a}$ are
properly colored for any $a$.  \end{proof}

\begin{lemma}\label{lem:mpw2}

	If  $G(\phi)$ admits a proper list coloring, then $\phi$ is a
	satisfiable \qcsp instance.

\end{lemma}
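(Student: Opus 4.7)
The plan is to mirror Lemma \ref{lem:cw2} but in a much simpler form, exploiting the fact that each $V_i$ is a clique of size exactly $\lfloor k/2\rfloor$. Given a proper list coloring $\mathbf{c}$ of $G(\phi)$, I would first define the candidate assignment $v_i := T^{-1}(\mathbf{c}(V_i))$ for each variable $x_i$. This is well-defined because the clique $V_i$ must use $\lfloor k/2\rfloor$ pairwise distinct colors, so $\mathbf{c}(V_i)$ is a $\lfloor k/2\rfloor$-subset of $\{1,\ldots,k\}$, which lies in the range of $T$ by construction.

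Next, I would argue that this assignment satisfies every constraint $c_j$. By Lemma \ref{lem:OR}, some vertex of $S_j$ must receive color $1$; let $a$ be the satisfying assignment of $c_j$ that this vertex represents. For each variable $x_i$ appearing in $c_j$, with $v$ being the value $a$ assigns to $x_i$, step (b) of the construction attaches one $(1\to c)$-implication to a distinct vertex of $U_i^{j,a}$ for every $c\in\{1,\ldots,k\}\setminus T(v)$. All these implications are activated, so by Lemma \ref{lem:implication-color} we get $\mathbf{c}(U_i^{j,a})\supseteq\{1,\ldots,k\}\setminus T(v)$, and since $|U_i^{j,a}|=\lceil k/2\rceil=|\{1,\ldots,k\}\setminus T(v)|$, in fact $\mathbf{c}(U_i^{j,a})=\{1,\ldots,k\}\setminus T(v)$.

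The key step, and the only real obstacle, is a short cardinality squeeze to conclude $v=v_i$. Because $U_i^{j,a}$ is joined by all possible edges to $V_i$, no color in $\mathbf{c}(V_i)=T(v_i)$ may appear in $\mathbf{c}(U_i^{j,a})$; equivalently, $\{1,\ldots,k\}\setminus T(v)=\mathbf{c}(U_i^{j,a})\subseteq\{1,\ldots,k\}\setminus T(v_i)$. Both outer sets have cardinality $\lceil k/2\rceil$, so this inclusion is an equality, giving $T(v)=T(v_i)$, and by injectivity of $T$, $v=v_i$. Thus our extracted assignment agrees with $a$ on every variable of $c_j$, so it satisfies $c_j$. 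Since $j$ was arbitrary, the assignment satisfies $\phi$, completing the proof. Everything else is routine bookkeeping that invokes Lemmata \ref{lem:OR} and \ref{lem:implication-color}; there is no need for a ``problematic indices'' argument as in Lemma \ref{lem:cw2}, because here the clique $V_i$ freezes a single assignment for $x_i$ globally rather than per-index.
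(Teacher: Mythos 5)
Your proposal is correct and follows essentially the same route as the paper's proof: extract the assignment from the clique colorings $\mathbf{c}(V_i)$, use the OR gadget to find the color-$1$ vertex representing some satisfying assignment $a$, use the implications to pin $\mathbf{c}(U_i^{j,a})$ to the complement of $T(v)$, and conclude agreement via the join to $V_i$ and the cardinality count $\lfloor k/2\rfloor + \lceil k/2\rceil = k$. The paper phrases this last step as a contradiction ($T(v)\neq T(v')$ would force an improperly colored edge) rather than your direct squeeze, but the content is identical.
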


\begin{proof}

Suppose that we are given a proper list coloring
$\mathbf{c}:V\to\{1,\ldots,k\}$ for $G$. We extract an assignment for $\phi$ as
follows: for each $i\in\{1,\ldots,n\}$, let $\mathbf{c}(V_i)$ be the set of
colors the coloring uses for $V_i$. Since $V_i$ is a clique, this set includes
$\lfloor k/2\rfloor$ elements. We therefore set $x_i = T^{-1}(\mathbf{c}(V_i))$
and this is well-defined since $T$ is a one-to-one correspondence between
$\{1,\ldots,B\}$ and subsets of $\{1,\ldots,k\}$ of size $\lfloor k/2\rfloor$. 

We argue that this assignment is satisfying. Suppose to the contrary that it
does not satisfy a clause $c_j$. Because of the OR gadget, one of the vertices
of $S_j$ has received color $1$, say the vertex that represents the satisfying
assignment $a$ of $c_j$. This assignment must disagree with our assignment in
some variable that appears in $c_j$, say the variable $x_i$. Suppose that $a$
assigns value $v'$ to $x_i$, while our assignment has given value $v$ to $x_i$.

Because of the implication gadgets incident on the vertex representing $a$ in
$S_j$, we have that $U_i^{j,a}$ uses the $\lceil k/2\rceil$ colors of
$\{1,\ldots,k\}\setminus T(v')$. However, $V_i$ uses the $\lfloor k/2\rfloor$
colors of $T(v)$. If $T(v)\neq T(v')$ then $T(v) \cap (\{1,\ldots,k\}\setminus
T(v')) \neq \emptyset$, which contradicts the correctness of the coloring.
\end{proof}

\begin{lemma}\label{lem:mpw3}

	$G(\phi)$ can be constructed in time polynomial in $|\phi|$, and
$\mpw(G) \le n + O(1)$.

\end{lemma}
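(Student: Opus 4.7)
The plan is to handle the two claims separately. The polynomial-time bound is essentially bookkeeping: since $k$ and $\epsilon$ are fixed, so are $B = {k \choose \lfloor k/2\rfloor}$ and $q$, and each constraint $c_j$ spawns at most $|S_j| \le B^q$ vertices for $S_j$ plus $O(qk^2 B^q)$ more for the $U_i^{j,a}$ sets and the associated OR and implication gadgets---all constant in $|\phi|$, so the overall construction runs in $\mathrm{poly}(|\phi|)$ time.

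For the modular pathwidth bound, I would first analyse the twin structure of $G(\phi)$. The key observation I expect to use is that each clique $V_i$ is a class of $\lfloor k/2\rfloor$ true twins, since every vertex of $V_i$ is adjacent to exactly the rest of $V_i$ and to every $U_i^{j,a}$ for which $x_i$ appears in $c_j$. Consequently, in $G^t$ the entire set $V_i$ collapses to a single vertex $v_i$, and the graph acquires a convenient structure: a ``spine'' of $n$ vertices $v_1, \ldots, v_n$, together with disjoint blocks of gadgetry---one per constraint $c_j$---each of size $O(qk^2 B^q)$ and attached to the spine only through the $U_i^{j,a}$--$v_i$ edges. Crucially, the internal vertices of OR and implication gadgets have, by construction, no edges to the rest of the graph beyond their designated endpoints, so each such gadget lies entirely within the block of the constraint that spawned it.

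I would then exhibit a path decomposition of $G^t$ of width $n + O(1)$ by keeping the spine in every bag and processing constraints one at a time. For each $j = 0, 1, \ldots, m-1$ I would insert a short sequence of bags containing, in addition to the spine, the (collapsed images of the) OR$(S_j)$ gadget and all $U_i^{j,a}$ and implication-internal vertices associated with $c_j$. Every edge of $G^t$ lies either inside such a block (internal gadget edges, and $U_i^{j,a}$--$v_i$ edges covered via the spine vertex $v_i$) or within the spine itself, so the decomposition is valid. Once $c_j$ is processed the transient vertices are dropped before moving on, which is why the width remains $n$ plus a constant depending only on $k$ and $\epsilon$.

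The final step is to convert the list-coloring instance into a genuine $k$-\kc instance via Lemma \ref{lem:list}. Since the only nontrivial twin classes are the $V_i$ and all their vertices carry the default list $\{1, \ldots, k\}$, the ``twins share lists'' hypothesis of that lemma holds, and the resulting $k$-\kc instance has modular pathwidth at most $(n + O(1)) + k = n + O(1)$, as required. The main obstacle I anticipate is confirming that no edge of the construction secretly crosses between two different constraint blocks---once this locality is verified, the path decomposition above is routine to write down.
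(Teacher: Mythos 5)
Your proof is correct and follows essentially the same route as the paper: collapse each clique $V_i$ (a class of true twins with identical lists) to a single spine vertex, observe that the per-constraint gadgetry is local, and build a path decomposition that keeps the $n$ spine vertices in every bag while sweeping through the constraints one block at a time. The paper phrases this as ``delete the $n$ collapsed vertices and show the residual graph has pathwidth $O(1)$,'' which is the same decomposition viewed from the other side.

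The one substantive difference is how the additive constant is bounded. You place an entire constraint block of size $O(qk^2B^q)$ into a single bag, so your $O(1)$ term depends on $k$, $q$ and $B$; this is perfectly adequate for the lemma as used in Theorem~\ref{thm:mpw}, where $k$ and $\epsilon$ are fixed. The paper instead analyses the block's structure: it invokes Lemma~\ref{lem:weak-pw} to contract the implication gadgets to edges, notes that the $U_i^{j,a}$ vertices become leaves once the $V_i$ are removed, and is left with the OR$(S_j)$ paths --- yielding an \emph{absolute} constant independent of $k$, $q$ and $B$. This sharper accounting is not cosmetic: the same construction is reused verbatim in the ETH-based Theorem~\ref{thm:mpw-eth} with $q=3$, $B=n$ and $k=2\log n$, where a block has polynomial size and your bag-per-block bound would give modular pathwidth $n/\log n + n^{\Theta(1)}$ instead of $n/\log n + O(1)$. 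So if you intend your argument to support the ETH corollary as well, you should refine the block analysis along the paper's lines (leaves hanging off the OR-paths, plus the constant overhead of Lemma~\ref{lem:weak-pw}) rather than stopping at ``the block has constant size.''
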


\begin{proof}

We first observe that, for all $i\in\{1,\ldots,n\}$, the vertices of $V_i$ are
true twins with the same list ($\{1,\ldots,k\}$). For the purposes of computing
the modular pathwidth of the graph, we can therefore retain a single vertex of
each $V_i$. We now delete these $n$ vertices, and what remains is to show that
the graph we obtain has pathwidth $O(1)$.

By Lemma \ref{lem:weak-pw} we can replace all implication gadgets by simple
edges, and this will not decrease the pathwidth of the graph by more than a
small constant. We observe that after removing the sets $V_i$, the graph is
disconnected, and we have one component for each $j\in\{0,\ldots,m\}$. This
component contains an OR($S_j$) gadget, and the vertices of the sets
$U_i^{j,a}$. However, all such vertices are now leaves, because each such
vertex is connected to a unique vertex of $S_j$ and its other neighbors (the
set $V_i$) no longer exist in the graph. If we remove such leaves the graph
that remains is simply the OR($S_j$) gadgets, which form paths. Hence, the
graph obtained after removing the sets $V_i$ has constant pathwidth.
\end{proof}

\begin{proof}[Theorem \ref{thm:mpw}]

	The proof now follows from the described construction, Theorem
\ref{thm:csp}, and Lemmata \ref{lem:list}, \ref{lem:mpw1}, \ref{lem:mpw2},
\ref{lem:mpw3} in a same way as the proof of Theorem \ref{thm:cw}. \end{proof}

\subsection{ETH-based Lower Bound}

\begin{theorem}\label{thm:mpw-eth}

If there exists an algorithm that solves $k$-\kc on instances with $n$ vertices
and $k=O(\log n)$ in time $n^{o(\mpw)}$, then the ETH is false.  As a result,
if there is an algorithm solving $k$-\kc in time $2^{o(k\cdot\mpw)}$, then the
ETH is false.

\end{theorem}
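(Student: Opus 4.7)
\emph{Proof plan.} The plan is to rerun the reduction of Theorem~\ref{thm:mpw} but let $k$ (and hence the CSP alphabet $B$) grow with $n$, then invoke the ETH in place of the SETH. The central calculation is that choosing $k = \Theta(\log n)$ trades CSP alphabet size against number of CSP variables in exactly the right way to make $N^{o(\mpw)} = 2^{o(n)}$.

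\emph{Setup and reduction.} Suppose for contradiction there is an algorithm solving $k$-\kc in time $N^{o(\mpw)}$ on $N$-vertex instances with $k = O(\log N)$. Starting from a 3-\sat instance on $n$ variables (which the ETH asserts requires $2^{\Omega(n)}$ time), I would pick an integer $k = c \log n$ for a suitable constant $c$ and set $B = \binom{k}{\lfloor k/2\rfloor}$, so that $\log B = \Theta(k) = \Theta(\log n)$. Using essentially the argument of Theorem~\ref{thm:csp} (or, more directly, grouping every $\lfloor \log_2 B\rfloor$ boolean variables into one CSP variable of alphabet $B$), one obtains an equivalent \qcsp instance $\phi$ with $n' = O(n/\log B) = O(n/\log n)$ variables and constraint arity bounded by a constant (say $3\lceil \log_2 B\rceil / \lfloor \log_2 B\rfloor = O(1)$, or at worst listing satisfying tuples explicitly). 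I then feed $\phi$ into the construction of Theorem~\ref{thm:mpw}, producing a $k$-\kc instance $G(\phi)$ with $\mpw(G(\phi)) \le n' + O(1) = O(n/\log n)$ and with a number of vertices $N$ polynomial in $n'$ and $B^{O(1)}$, hence $N = \mathrm{poly}(n)$. In particular, $k = \Theta(\log n) = O(\log N)$, so the hypothesized algorithm applies.

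\emph{Deriving the contradiction.} Running the assumed algorithm on $G(\phi)$ takes time
\[
N^{o(\mpw)} \;=\; N^{o(n/\log n)} \;=\; 2^{O(\log N)\cdot o(n/\log n)} \;=\; 2^{o(n)},
\]
where in the last step we used $\log N = O(\log n)$. Combined with the (polynomial) cost of the reduction, this decides the original 3-\sat instance in time $2^{o(n)}$, contradicting the ETH. For the second half of the statement, an algorithm running in $2^{o(k\cdot \mpw)}$ on our same instance with $k = O(\log N)$ would run in time $2^{o(\log N \cdot \mpw)} = N^{o(\mpw)}$, which is ruled out by the first half.

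\emph{Main obstacle.} The principal bookkeeping challenge is verifying that the three size parameters stay in balance for $k = \Theta(\log n)$: the graph size $N$ must remain polynomial in $n$ (so that $k = O(\log N)$), the modular pathwidth must shrink to $O(n/\log n)$ (so that the exponent scales correctly), and the reduction from 3-\sat to \qcsp over an alphabet of size growing as $B = \mathrm{poly}(n)$ must still be carried out in polynomial time. All three hold because $B^{O(1)}$ is polynomial in $n$ and the constructions of Theorem~\ref{thm:csp} and Theorem~\ref{thm:mpw} are polynomial in their input size even when $B$ is not constant, provided the constraint arity stays bounded (which it does since we start from 3-\sat).
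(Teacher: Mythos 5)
Your proposal is correct and follows essentially the same route as the paper: reduce from \tsat by packing $\Theta(\log n)$ Boolean variables into one CSP variable over a polynomial-size alphabet (arity stays $3$ since each clause touches at most three groups), run the construction of Theorem~\ref{thm:mpw} with $k=\Theta(\log n)$ so that $\binom{k}{\lfloor k/2\rfloor}\ge B$, and observe that $\mpw = O(n/\log n)$ and $N=\mathrm{poly}(n)$ turn $N^{o(\mpw)}$ into $2^{o(n)}$. The only cosmetic difference is that the paper fixes $B=n$ and $k=2\log n$ explicitly rather than defining $B$ from $k$.
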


\begin{proof}[Theorem \ref{thm:mpw-eth}]

The construction we use to prove Theorem \ref{thm:mpw-eth} is identical to the
construction presented in the previous section. We change, however, our
starting point: given an $n$-variable \tsat instance, we will construct a \qcsp
instance where $q=3$ and $B=n$. 

More specifically, given an instance $\phi_1$ of \tsat with $n$ variables, our
first step will be to construct an instance $\phi_2$ of \qcsp.  We assume
without loss of generality that $n$ is a power of $2$. We partition the
variables of $\phi$ into $t$ sets $V_1,\ldots,V_t$, such that each set contains
at most $\log n$ variables. Therefore $t=\lceil n/\log n/\rceil$. For each
group of variables $V_i$ we define a \csp variable $x_i$ that will take values
in $\{1,\ldots,n\}$. We make a one-to-one correspondence translation between
values of $x_i$ and truth assignments for the variables of $V_i$. We now
construct the constraints of $\phi_2$ as follows: for each clause $c_j$ of
$\phi_1$, suppose that $c_j$ involves three variables, from the groups
$V_{j_1},V_{j_2},V_{j_3}$.  We construct a constraint $c_j'$ involving the
variables $x_{j_1},x_{j_2},x_{j_3}$. The satisfying assignments of this
constraint are all assignments whose translation satisfies $c_j$. It is not
hard to see that the construction of $\phi_2$ can be done in polynomial time,
and that we produce an equivalent instance.

We have now constructed a \qcsp with $N=\lceil n/\log n\rceil$ variables,
alphabet size $B=n$, and arity $q=3$. If there exists an algorithm solving this
instance in $B^{o(N)} = 2^{o(n)}$, then the ETH is false. Set $k:=2\log n$. We
observe that ${k\choose \lfloor k/2\rfloor} = \frac{(2\log n)!}{(\log n)!(\log
n)!} \ge 2^{\log n} = B$. We therefore perform the construction of Theorem
\ref{thm:mpw}, where we have $q=3$, $k=2\log n$, and we define the translation
function $T$ so that it is one-to-one; this is possible since ${k\choose k/2}
\ge B$. All the arguments of the construction go through unchanged, and the
produced graph has size polynomial in $n$ and modular pathwidth $N+O(1) =
\frac{n}{\log n}+O(1)$.  Therefore, if there exists an algorithm running in
time $|G|^{o(\cw)} = 2^{o(n)}$ this would contradict the ETH. Similarly, if
there was an algorithm running in $2^{o(k\cdot\mpw)} = 2^{o(n)}$ this would
contradict the ETH.  \end{proof}

\section{Algorithms}\label{sec:algs}

We present two algorithms establishing that the lower bounds of Sections
\ref{sec:cw},\ref{sec:mtw} are essentially tight.  Though both algorithms are
based on standard techniques, we remark that the algorithm for clique-width
requires some extra effort to obtain a DP of the promised size.

\subsection{Clique-width algorithm}

Our algorithm is based on standard DP. Its basic idea is that a partial
solution is characterized by the set of colors it uses to color a set of
vertices that share the same label. This leads to a DP table of size
$(2^k-1)^{\cw}$, by observing that for any non-empty label set, any viable
partial solution will use at least one color, hence there are $2^k-1$ possible
subsets of $\{1,\ldots,k\}$ to consider. To improve this to $(2^k-2)^{\cw}$,
which would match the lower bound of Theorem \ref{thm:cw} we need a further
idea which will allow us to also rule out the set that uses all $k$ colors.

Let $t$ be a node of the binary tree representing the clique-width expression
of $G$, and let $V_t^i$ be the set of vertices that have label
$i\in\{1,\ldots,\cw\}$ in the labeled graph $G_t$ produced by the
sub-expression rooted at $t$. We will say that $V_t^i$ is a \emph{live} label
set if there exists an edge in $G$ that is incident on a vertex of $V_t^i$ and
does not appear in $G_t$. In other words, a label set is live if there is a
join operation that involves its vertices which has not yet appeared in $t$.
The main observation is that live label sets cannot use all $k$ colors in a
valid partial solution, since then the subsequent join operation will fail.
Non-live label sets, on the other hand, are irrelevant, since if the coloring
is already valid for such a set it is guaranteed to remain valid. Our DP
algorithm will therefore keep track of the partial colorings only of live label
sets, and thus produce a DP table of size $(2^k-2)^{\cw}$.  In this sense, our
DP algorithm is slightly non-standard, as part of its procedure involves
``looking ahead'' in the graph to determine if a label set is live or not.
What remains is the problem of implementing the DP so that it takes time linear
in the table size;  this is handled using the techniques introduced in
\cite{BodlaenderLRV10,RooijBR09}.

\begin{theorem}\label{thm:cw-alg}

There is an algorithm which, given a graph $G$, an integer $k$, and a
clique-width expression for $G$ with $\cw$ labels decides if $G$ is
$k$-colorable in time $O^*\left((2^k-2)^{\cw}\right)$.

\end{theorem}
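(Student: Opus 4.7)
The plan is a bottom-up dynamic program on the rooted binary clique-width expression of $G$, together with a look-ahead trick that prunes the state space. At every node $t$ I would maintain a table indexed by tuples $\sigma=(\sigma_1,\ldots,\sigma_{\cw})$ with $\sigma_i\subseteq\{1,\ldots,k\}$, where $\sigma$ is marked realisable iff $G_t$ admits a proper coloring that uses exactly the colors $\sigma_i$ on the currently label-$i$ vertices $V_t^i$. The transitions are routine: Introduce($i$) seeds $k$ singleton tuples, Rename($i,j$) collapses the two coordinates by union, and Join($i,j$) discards every tuple with $\sigma_i\cap\sigma_j\neq\emptyset$; the Union node requires $T_t[\sigma]=\bigvee_{\sigma^{(1)}\cup\sigma^{(2)}=\sigma} T_{t_1}[\sigma^{(1)}]\wedge T_{t_2}[\sigma^{(2)}]$ with componentwise union, which I would implement via the cover-product variant of subset convolution from \cite{BodlaenderLRV10,RooijBR09} so that each Union node costs time nearly linear in the table size.

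The key new ingredient is a one-step look-ahead that bounds this table size. Call a label $i$ \emph{live at $t$} if $V_t^i$ is non-empty and some edge of $E(G)\setminus E(G_t)$ is incident to a vertex of $V_t^i$; the liveness of every (node, label) pair can be pre-computed by a single traversal of the expression tree. Non-live labels are irrelevant: their vertices are already in their final configuration, so $\sigma_i$ cannot influence feasibility and the slot can be collapsed to a single fixed representative value. A live label must satisfy $\sigma_i\neq\emptyset$, because $V_t^i$ itself is non-empty, and $\sigma_i\neq\{1,\ldots,k\}$: the witnessing future edge will be created by some Join($i',j'$) at an ancestor of $t$ where one side contains the descendants of $V_t^i$ and hence uses every color, which would force the other side to be empty, contradicting the existence of that edge. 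Consequently each coordinate carries at most $2^k-2$ meaningful values and the table at every node has size at most $(2^k-2)^{\cw}$.

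With these two ingredients in place, the last step is to execute each of the polynomially many transitions in time $O^*((2^k-2)^{\cw})$. Introduce, Rename and Join amount to scans of the restricted table with constant per-entry cost. I expect the main technical obstacle to be the Union node, where the componentwise cover product must be computed over a product of restricted per-coordinate alphabets rather than over the full Boolean lattice on $\cw\cdot k$ elements. The plan is to run the zeta and M\"obius transforms coordinate by coordinate on the $(2^k-2)^{\cw}$-entry table, treating non-live coordinates by their fixed representative value and exploiting the fact that liveness propagates upward through Union (a label live in a child is live in the parent) so that no information about the children's realisable states is lost when combining the two child tables. Summed over the linearly many nodes of the expression, this yields the claimed $O^*((2^k-2)^{\cw})$ total running time.
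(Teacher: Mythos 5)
Your proposal matches the paper's proof essentially step for step: the same look-ahead notion of a \emph{live} label, the same argument that a live label can use neither $\emptyset$ nor all of $\{1,\ldots,k\}$ (the pending join would leave the other side no available color), and the same coordinate-wise cover-product treatment of union nodes via \cite{BodlaenderLRV10,RooijBR09}. The one detail to tighten is that the M\"obius inversion you invoke requires subtraction, so the table must store integer \emph{counts} of partial colorings (as the paper does) rather than Boolean realisability flags.
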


\begin{proof}

We assume we are given a binary tree representing the clique-width expression
that produces $G$. For each node $t$ let $\ell(t)$ be the number of live labels
in the graph $G_t$, that is, the graph produced by the sub-expression rooted at
$t$. A label $i$ is live at $G_t$ if the set $V_t^i$ of vertices that have this
label in $G_t$ is non-empty, and there is an edge in $G$ that is incident on a
vertex of $V_t^i$ and does not appear in $G_t$. Clearly, $\ell(t)\le \cw$.

Let $\mathcal{C} := 2^{\{1,\ldots,k\}} \setminus \{\emptyset, \{1,\ldots,k\}\}$
be the set of all interesting sets of colors, that is, all sets of colors
except the empty set and the set that contains all colors.  For each node $t$
of the tree we will maintain a dynamic programming table $A_t :
\mathcal{C}^{\ell(t)} \to \mathbb{N}$.  The meaning of such a table is the
following: its input, describes the signature of a partial solution, that is, a
valid $k$-coloring of $G_t$, in the sense that it defines for each live label
$i$ the set of colors used in a partial coloring to color the vertices of
$V_t^i$.  Given such a signature the table stores \emph{the number} of
colorings of $G_t$ that agree with this signature.  The reason we select to
solve the more general counting problem is that this allows us to adopt the
techniques of \cite{RooijBR09} to speed up that computation of the table in
union nodes. To ease notation, we will interpret signatures $S\in
\mathcal{C}^{\ell(t)}$ as functions which, given a live label return the set of
colors used in this label set in signature $S$.

Before proceeding, let us explain why this table, whose size is
$O^*\left((2^k-2)^{\cw}\right)$ stores sufficient information to solve the
problem. First, it is not hard to see that it is not a problem that $A_t$ does
not store information regarding empty label sets, hence it is not a problem
that $\emptyset\not\in\mathcal{C}$. Second, if a partial solution uses all
colors $\{1,\ldots,k\}$ to color a live label set $V_t^i$, it is clear that
this solution cannot be extended to a coloring of the whole graph, as the
vertices of $V_t^i$ will later acquire a new common neighbor (since $i$ is
live), for which no color is available. Hence, not considering such solutions
for live labels is also not a problem. Finally, if a partial solution
represents valid $k$-colorings of $G_t$, vertices belonging to a non-live label
have all their edges properly colored. Since the label is non-live, these
vertices already have their incident edges properly colored in $G$, $A_t$ does
not need to store information on non-live labels, apart from the fact that the
coloring of $G_t$ is valid.

Given the above definition of the table we will now proceed inductively on its
calculation. For introduce nodes that construct a vertex with label
$l\in\{1,\ldots,\cw\}$ there is only one live label, so for each color
$c\in\{1,\ldots,k\}$ the table $A_t$ has an entry $A_t[ \{c\} ] = 1$. 

For a join node $t$ with child node $t'$ where the join operation is performed
between labels $i_1,i_2$, for every signature $S \in \mathcal{C}^{\ell(t')}$
such that $S(i_1)\cap S(i_2) \neq \emptyset$ we update the corresponding entry
of $A_{t'}$ to $0$. Then, if $i_1,i_2$ are still live labels in $t$ we simply
copy $A_{t'}$ to $A_t$. If $i_1$ has become non-live in $t$ we set $A_t[S] =
\sum_{S_1\in\mathcal{C}} A_{t'}[S\times S_1]$, that is, to compute the number
of solutions that produce a signature $S$ in $t$ we consider all signatures
$S'$ in $t'$ which extend $S$ by giving a set of colors to label $i_1$, and
take the sum of the corresponding entries. We do a similar operation if $i_2$,
or both $i_1,i_2$ become non-live.

For a rename node $t$ with child node $t'$ where the rename operation is
performed from label $i_1$ to label $i_2$, we have several cases. First, if
$i_1$ was a non-live label in $t'$, then $i_2$ is also non-live in both $t,t'$.
In this case, we simply copy the table $A_{t'}$ to $A_t$. Second, if $i_1$ is
live in $t'$, then $i_2$ is live in $t$. Now, suppose that $i_2$ is non-live in
$t'$. This means that $i_2$ is empty in $t'$, so we can again copy $A_{t'}$ to
$A_{t}$ replacing label $i_1$ with $i_2$ in every signature. Finally, the
interesting case is when $i_1,i_2$ are both live in $t'$, therefore $i_2$ is
live in $t$.  We have $\ell(t) = \ell(t')-1$. We initialize $A_t$ to be $0$
everywhere and then, for every entry of $A_{t'}$ we do the following: if
$A_{t'}[S] = v$ we take the signature $S'$ in $t$ which sets $S'(i_2) =
S(i_1)\cup S(i_2)$, and $S'(i')=S(i')$ for $i'\neq i_2$, and add to $A[S']$ the
value $v$, under the condition that $S(i_1)\cup S(i_2)\neq \{1,\ldots,k\}$.  In
other words, in order to calculate how many solutions give a set of colors to
label $i_2$ in $t$, we consider all solutions in $t'$ such that the union of
colors used in $i_1,i_2$ gives this set. As explained, we disregard signatures
$S$ for which $S(i_1)\cup S(i_2)=\{1,\ldots,k\}$, because such signatures do
not represent solutions that can be extended to colorings of $G$, due to the
liveness of $i_1,i_2$.

Finally, the most challenging part of this algorithm is how to handle a union
node $t$ with two children $t_1,t_2$. Here, we will need to transform the
tables $A_{t_1},A_{t_2}$ in a form that will more easily allow us to combine
the solutions, along the lines of the technique used in \cite{RooijBR09}. In
particular, define two tables $B_{t_1},B_{t_2}: \mathcal{C}^{\ell(t)} \to
\mathbb{N}$ whose intended meaning is the following: the input is a signature
$S$ giving set of colors for every live label of $t$ (note that some of these
labels may be empty in $t_1$ or $t_2$); a partial solution in $G_{t_1}$
satisfies this signature if for each live label $i$ of $t$ the coloring uses
\emph{a subset of} $S(i)$ to color $V_{t_1}^i$ (and similarly for $G_{t_2}$).
In other words, the difference between tables $A,B$ is that in $A$ the sets of
colors given in the signature are exact, while in $B$ they form upper bounds on
the set of allowed colors that a partial solution may use in a label set. It is
not hard to see that the size of $B_{t_1},B_{t_2}$ is
$O^*\left((2^k-2)^{\cw}\right)$, because once again we are only interested in
live labels of $t$.

We now need to describe three steps: first, how we can convert
$A_{t_1},A_{t_2}$ to $B_{t_1},B_{t_2}$ in time almost-linear in the size of the
tables; second, how we can compute $B_t$ from $B_{t_1},B_{t_2}$ in the same
time; third, how we can obtain $A_t$ from $B_t$.

For the first part, we proceed inductively. Let $\ell = \ell(t_1)$ be the
number of live labels of $t_1$, and suppose without loss of generality that
these labels are $\{1,\ldots,\ell\}$.  For each $i\in \{1,\ldots, \ell\}$ and
$j\in\{0,1,\ldots,k+1\}$ we define $B^{i,j}_{t_1}[S]$ as the number of
colorings with the following property: if the coloring uses the set of colors
$C_l$ to color $V_{t_1}^l$ then for all $l< i$, $C_l\subseteq S(l)$; for all
$l>i$, $C_l=S(l)$; and for $l=i$, $C_l\cap\{1,\ldots,j\}\subseteq
S(l)\cap\{1,\ldots,j\}$, and $C_l\cap\{j+1,\ldots,k\} =
S(l)\cap\{j+1,\ldots,k\}$. We observe that $B_{t_1}^{1,0} = A_{t_1}$, while
$B_{t_1}^{l,k+1}=B_{t_1}$. We also observe that $B_{t_1}^{i,k+1} =
B_{t_1}^{i+1,0}$, for all $i\in\{1,\ldots,\ell-1\}$. Therefore, it suffices to
explain how to compute $B_{t_1}^{i,j}$ from $B_{t_1}^{i,j-1}$ for all
$i\in\{1,\ldots,\ell\}$ and $j\in\{1,\ldots,k+1\}$ in time
$O^*\left((2^k-2)^{\cw}\right)$, and then we can repeat this process $k\ell$
times. We now use the fact that for all signatures $S$, such that $j\not\in
S(i)$, $B_{t_1}^{i,j}[S] = B_{t_1}^{i,j-1}[S]$; while for signatures $S$ where
$j\in S(i)$ we have $B_{t_1}^{i,j}[S] = B_{t_1}^{i,j-1}[S] +
B_{t_1}^{i,j-1}[S']$, where $S'$ is the signature that agrees with $S$
everywhere, except it sets $S'(i) = S(i)\setminus\{j\}$. The correctness of
this procedure follows directly from the definitions given above.  Note that we
have defined $B_{t_1}$ on signatures on live labels in $t_1$, rather than
signatures on live labels of $t$. However, the labels which are live in $t$ but
not $t_1$ are empty in $G_{t_1}$, hence their coloring is irrelevant, and a
version of $B_{t_1}$ that uses signatures on live labels of $t$ can easily be
obtained. We obtain $B_{t_2}$ with the same algorithm.

For the second part, we observe that for all signatures $S$ we have $B_t[S] =
B_{t_1}[S] \times B_{t_2}[S]$. Since multiplication of numbers with $n^{O(1)}$
bits can be dones in $n^{O(1)}$ time, this step also takes time
$O^*\left((2^k-2)^{\cw}\right)$.

Finally, to convert $B_t$ to $A_t$ we use the reverse of the procedure we
described. We again define $B_t^{i,j}$ in the same way, and compute
$B_t^{i,j-1}$ from $B_t^{i,j}$ for all $i\in\{1,\ldots,\ell\}$,
$j\in\{1,\ldots,k+1\}$. For a signature $S$ such that $j\not\in S(i)$ we set
$B_t^{i,j-1}[S] = B_t^{i,j}[S]$; while if $j\in S(i)$, and $S'$ is the
signature that agrees with $S$ everywhere except $S'(i)=S(i)\setminus\{j\}$, we
set $B_t^{i,j-1}[S] = B_t^{i,j}[S] - B_t^{i,j}[S']$.  \end{proof}

\subsection{Modular Treewidth Algorithm}

For modular treewidth, we remark that $k$-\kc for this parameter can be seen as
an equivalent version of \textsc{Multi-Coloring} parameterized by treewidth.
In \textsc{Multi-Coloring}, each vertex $v$ has a demand $b(v)$, and we are
asked to assign $b(v)$ distinct colors to each vertex so that neighboring
vertices have disjoint colors (see e.g.  \cite{BonamyKPSW17}).  In our context,
the vertex representing a class of $b$ true twins corresponds to a vertex with
demand $b$.

\begin{theorem}\label{thm:mtw-alg}

There is an algorithm which, given a graph $G$, an integer $k$, and a modular
	tree decomposition of $G$ of width $\mtw$, decides if $G$ is
	$k$-colorable in time $O^*\left({k\choose \lfloor
	k/2\rfloor}^{\mtw}\right)$.

\end{theorem}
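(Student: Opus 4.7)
The approach is to reinterpret $k$-\kc on $G$ as a \textsc{Multi-Coloring} instance on $G^t$, which is then solved by a standard treewidth DP on the given modular tree decomposition. The preprocessing step is: compute the unique partition of $V(G)$ into twin classes $T_1, \ldots, T_N$, noting that each $T_i$ must be either a clique (true twins) or an independent set (false twins), since for three mutually twin vertices the shared-neighborhood condition forces all three pairs to behave uniformly. For each class I assign a demand $b_i$ equal to $|T_i|$ when $T_i$ is a clique and to $1$ when $T_i$ is an independent set; if any $b_i > k$ the algorithm immediately answers ``no''. The vertices of $G^t$ are one representative per class, with an edge between two representatives exactly when the corresponding classes are joined by a complete bipartite graph in $G$.

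The first claim is that $G$ admits a proper $k$-coloring if and only if $G^t$ admits a multi-coloring giving each vertex $v$ a set $S_v \subseteq \{1,\ldots,k\}$ of size $b_v$ with $S_u \cap S_v = \emptyset$ for every edge $uv$ of $G^t$. The backward direction is immediate: distribute the $b_v$ colors to the $b_v$ vertices of a clique class, or replicate the single color on an independent class. The forward direction uses the standard false-twin reduction: in any proper coloring, the vertices of an independent twin class may all be recolored with any single color already used by one of them, since they share a neighborhood and have no internal edges; combined with the obvious fact that a clique of size $b_i$ necessarily consumes exactly $b_i$ distinct colors, this turns a proper coloring of $G$ into a multi-coloring of $G^t$ with the stated demands.

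Given the reformulation, I would run the classical coloring DP on a tree decomposition of $G^t$ of width $\mtw$, but with ``color $c \in \{1,\ldots,k\}$'' replaced by ``color set $S_v$ of size $b_v$''. A table entry at a bag $X$ is indexed by an assignment $(S_v)_{v \in X}$ satisfying $|S_v|=b_v$ and $S_u \cap S_v = \emptyset$ for every edge $uv$ inside $X$, and records whether some partial multi-coloring of the processed subgraph realizes this signature. Introduce, forget, and join nodes are handled exactly as in the ordinary coloring DP, with constant-time consistency checks. Since ${k \choose b}$ over $b \in \{1,\ldots,k\}$ is maximized at $b=\lfloor k/2\rfloor$, the number of signatures at a bag of size at most $\mtw+1$ is bounded by $\prod_{v \in X} {k \choose b_v} \le {k \choose \lfloor k/2 \rfloor}^{\mtw+1}$, and each transition costs polynomial time per signature, giving overall running time $O^*\left({k \choose \lfloor k/2 \rfloor}^{\mtw}\right)$.

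The main obstacle is justifying the multi-coloring equivalence in the first step: one must verify that demand $1$ really suffices for false twin classes of arbitrary size (the recoloring argument) and that the edge structure of $G^t$ correctly captures the complete-bipartite nature of inter-class edges in $G$. Once this reformulation is in place, everything else is a textbook treewidth DP and, in contrast to Theorem \ref{thm:cw-alg}, no subset-convolution trick is needed to reach the claimed bound.
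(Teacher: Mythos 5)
Your proposal is correct and follows essentially the same route as the paper: reduce $k$-\kc on $G$ to \textsc{Multi-Coloring} on $G^t$ with demand equal to the twin-class size, then run the standard treewidth DP over color-set signatures, bounding each factor by $\binom{k}{\lfloor k/2\rfloor}$. The only (immaterial) difference is that the paper first deletes all but one vertex from each false-twin class rather than assigning it demand $1$ as you do; both treatments are equivalent.
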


\begin{proof}

	The algorithm uses standard DP techniques, so we will sketch some of
	the details. We assume that we are given the graph $G^t$, where each
	twin class is represented by a single vertex (otherwise, $G^t$ can be
	computed in polynomial time). Furthermore, we can assume that $G$ did
	not contain any false twins, because if for two vertices $u,v$ we have
	$N(u)=N(v)$ then deleting one of the two vertices does not affect the
	$k$-colorability of the graph. Hence, every vertex of $G^t$ represents
	a class of true twins in $G$. For each vertex $v$ of $G^t$ we define
	$b(v)$ as the size of the class of true twins that contains $v$ in $G$
	(therefore, for vertices without twins in $G$ we have $b(v)=1$).  We
	observe that $\max_{v\in V} b(v) \le k$, because otherwise there exists
	a clique of size $k+1$ and we can immediately reject.

	We will now solve \textsc{Multi-Coloring} on $G^t$ with the demands
	$b(v)$ we defined. We recall that the goal is to assign to each vertex
	$v$ a subset $C(v)\subseteq \{1,\ldots,k\}$ with $|C(v)|=b(v)$ so that
	for any edge $(u,v)\in E$ we have $C(v)\cap C(u) = \emptyset$.

	Our algorithm will, in every bag consider all possible assignments of
	$b(v)$ colors to each vertex $v$ of the bag. We reject partial
	solutions for which two neighbors in the bag have non-disjoint
	assignments. It is not hard to see how this table can be maintained
	with standard DP techniques in time linear in its size, if we are
	working with a nice tree decomposition \cite{BodlaenderK08}. Let us
	therefore bound the size of the table. For a vertex $v$ with demand
	$b(v)$ our algorithm will consider all $k\choose b(v)$ possible
	colorings. However, ${k\choose b(v)} \le {k\choose \lfloor
	k/2\rfloor}$, we therefore have that the size of the table is at most
	${k\choose \lfloor k/2 \rfloor}^{\mtw}$.  \end{proof}

\section{Conclusions -- Open Problems}

We have given tight bounds for $k$-\kc parameterized by clique-width,
complementing previously known bounds for treewidth. A natural question is now
how robust these bounds are, especially in the context of approximation.
Specifically, does there exist a constant factor approximation algorithm for
$k$-\kc running in $O^*\left((k-\epsilon)^{\tw}\right)$ or
$O^*\left((2^k-2-\epsilon)^{\cw}\right)$? Current knowledge cannot even rule
out the existence of such algorithms with a small \emph{additive} approximation
error and this area is still largely unexplored.

\bibliography{cw-coloring}

\end{document}